\newcommand{\env}{\mathit{Env}}
\newif\ifdraft\drafttrue
\newcommand{\ntrans}[1]{\mathrel{{\trans{#1}}\makebox[0em][r]{$\not$\hspace{2ex}}}{\!}}
\newcommand{\ttrans}[1]{\stackrel{\, {#1} \,}{\Longrightarrow}}
\newcommand{\ttranst}[1]{\Longrightarrow\trans{#1}\Longrightarrow}
\newcommand{\rel}{\mathcal R}
\newcommand{\on}{\mathsf{on}}
\newcommand{\off}{\mathsf{off}}
\newcommand{\confCPS}[2]{#1 \, {\Join} \, #2}
\newcommand{\rsens}[2]{\mathsf{read}\, #2(#1)}
\newcommand{\wact}[2]{\mathsf{write}\, #2 \langle #1 \rangle}
\newcommand{\statefun}{\xi_{\mathrm{x}}}
\newcommand{\actuatorfun}{\xi_{\mathrm{u}}}
\newcommand{\uncertaintyfun}{\xi_{\mathrm{w}}}
\newcommand{\evolmap}{\mathit{evol}}
\newcommand{\errorfun}{\xi_{\mathrm{e}}}
\newcommand{\measmap}{\mathit{meas}}
\newcommand{\invariantfun}{\mathit{inv}}
\renewcommand{\operatorname}[1]{\mathit{#1}}
\newcommand{\CPS}{CPS}
\newcommand\restrict[1]{\raise-.5ex\hbox{\ensuremath|}_{#1}}
\begin{document}

\title{A Calculus of Cyber-Physical Systems\thanks{An extended abstract appeared in the Proc.\ of LATA 2017, volume 10168 of Lecture Notes in Computer Science, pp.\ 115-127, Springer, 2017.}}
\titlerunning{A Calculus of Cyber-Physical Systems}

\author{Ruggero Lanotte\inst{1} \and Massimo Merro\inst{2}}
\authorrunning{R.\ Lanotte and M.\ Merro}

\institute{Dipartimento di Scienza e Alta Tecnologia, Universit\`a dell'Insubria, Como, Italy \and Dipartimento di Informatica, Universit\`a degli Studi di Verona, Italy}

\maketitle

\begin{abstract}
We propose a hybrid process calculus for modelling  and reasoning on
\emph{cyber-physical systems\/} (\CPS{s}). 
The dynamics of the calculus is expressed in terms of 
a \emph{labelled transition system\/} in the SOS style of Plotkin. 
This is used to define a 
\emph{bisimulation-based\/} behavioural semantics
which support compositional reasonings. Finally, we prove run-time properties and system equalities for a non-trivial case study. 
\keywords{Process calculus, cyber-physical system, semantics.}
\end{abstract}


\section{Introduction}

\emph{Cyber-Physical Systems} (\CPS{s}) are integrations of 
networking and distributed computing systems with physical processes,
 where feedback loops allow physical processes to affect computations and vice versa. For
example, in real-time control systems, a hierarchy of \emph{sensors},
\emph{actuators} and \emph{control processing components} are connected to
control stations. Different kinds of \CPS{s} include
supervisory control and data acquisition (SCADA), programmable logic
controllers (PLC) and distributed control systems.

\begin{figure}[t]
\centering
\includegraphics[width=6.5cm,keepaspectratio=true,angle=0]{./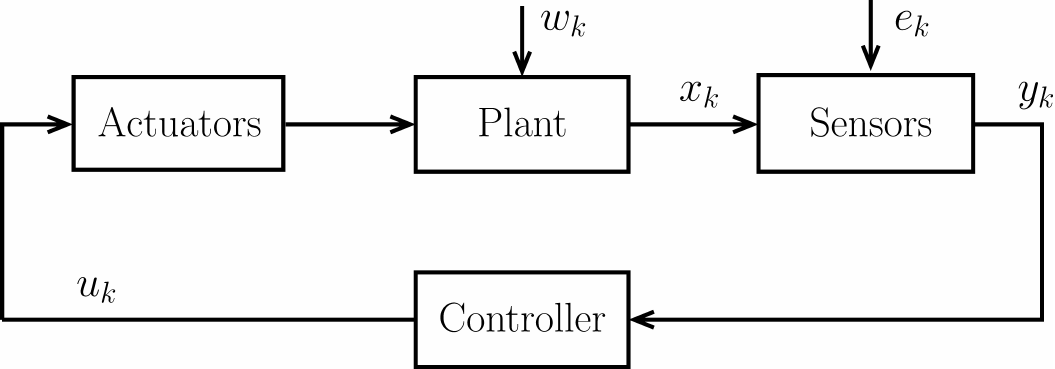}
\caption{Structure of a \CPS}
\label{fig:cps-model}
\end{figure}

The \emph{physical plant} of a \CPS{} is typically 
represented by means of a \emph{discrete-time state-space
model\/}\footnote{See~\cite{survey-CPS-security-2016} for a tassonomy of time-scale models used to represent \CPS{s}.} consisting of two 
equations of the form
\begin{center}
\begin{math}
\begin{array}{rcl}
x_{k+1} & = & Ax_{k} + Bu_{k} + w_{k}\\[2pt]
y_k & = & Cx_{k} + e_k
\end{array}
\end{math}
\end{center}
where
$x_k \in \mathbb{R}^n$ is the current \emph{(physical) state}, $u_k \in
\mathbb{R}^m$ is the \emph{input} (i.e., the control actions implemented
through actuators) and $y_k \in \mathbb{R}^p$ is the \emph{output} (i.e.,
the measurements from the sensors).
The \emph{uncertainty} $w_k \in  \mathbb{R}^n$ and the \emph{measurement error} $e_k \in  \mathbb{R}^p$ represent perturbation and sensor noise, 
respectively, and $A$, $B$, and $C$ are matrices modelling the dynamics of the physical system.  
The \emph{next state} $x_{k+1}$ depends on the current state $x_k$ and 
the corresponding control actions $u_k$, at the sampling instant $k
\in \mathbb{N}$. Note that, the state $x_k$ cannot be directly observed: only its
measurements $y_k$ can be observed.

The physical plant is supported by a communication network through which
the sensor measurements and actuator data are exchanged with the
\emph{controller(s)\/}, i.e., the \emph{cyber} component, also called
logics,  of a CPS (see \autoref{fig:cps-model}).

The range of \CPS{s} applications is rapidly increasing and 
already covers several domains~\cite{CPS-applications}: 
advanced automotive systems, 
energy conservation, environmental monitoring, avionics, 
critical infrastructure control (electric power, water resources, and communications systems for example), 
etc. 

However, there is still a lack of research on the modelling and validation of \CPS{s} through formal methodologies that might allow to model the interactions among the system components, and to verify the correctness of a \CPS{}, as a whole, before its practical implementation. A straightforward utilisation of these techniques is for \emph{model-checking}, i.e.\ to statically assess whether the current system deployment can guarantee the expected behaviour. However, they can also be an important aid for system planning, for instance to decide whether 
 different deployments for a given application are  behavioural equivalent.

In this paper, we propose a contribution in the area of \emph{formal methods} for \CPS{s}, by defining  a \emph{hybrid process calculus}, called \cname{}, with a clearly-defined \emph{behavioural semantics\/} for specifying and reasoning on \CPS{s}.  
In \cname{}, systems are represented as terms of the form 
$\confCPS E  P $, 
where $E$ denotes the \emph{physical plant} (also called environment) of the
system, containing information on state variables, actuators, sensors,
evolution law, etc., while $P$ represents the \emph{cyber
component of the system}, i.e., the \emph{controller} that governs sensor reading and
actuator writing, as well as channel-based communication with other cyber
components. Thus, channels are used for logical interactions between cyber
components, whereas sensors and actuators make possible the interaction
between cyber and physical components. 
Despite this conceptual similarity, 
messages transmitted via channels are ``consumed'' upon reception,  
whereas  actuators' states (think of a valve)
remains unchanged until its controller modifies it. 

\enlargethispage{\baselineskip}

 \cname{}   is equipped with a \emph{labelled transition semantics}
(LTS) in the SOS style of Plotkin~\cite{Plo04}. 
 We prove that our labelled transition semantics satisfies 
some standard time properties such as: \emph{time determinism\/}, 
\emph{patience}, \emph{maximal progress}, and \emph{well-timedness\/}.
Based on our LTS, we define a natural notion of \emph{weak bisimilarity}. 
As a main result, we prove that our bisimilarity is a congruence 
and it is hence suitable for \emph{compositional reasoning\/}.
We are not aware of similar results in the context of  \CPS{s}. 
Finally, we provide a non-trivial \emph{case study}, taken from an
engineering application, and use it to illustrate our definitions and 
our semantic theory for \CPS{s}. 
 Here, we wish to
remark that while we have kept the example simple, it is actually far from trivial and designed to show that various \CPS{s} can be modelled in this style.
\paragraph*{Outline}
In \autoref{sec:calculus}, we give syntax and 
 operational semantics of
 \cname{}. In \autoref{sec:bisimulation} we provide a bisimulation-based 
behavioural semantics for \cname{} and prove its compositionality. 
In \autoref{sec:case-study} we model in \cname{} our case study, and prove for it 
 run-time properties as well as system equalities. In \autoref{sec:conclusions}, we discuss related and future work.


\section{The Calculus}
\label{sec:calculus}

In this section, we introduce our \emph{Calculus of Cyber-Physical
Systems} \cname{}. 
Let us start with some preliminary notations. 
We use   $x, x_k \in \cal X$ for \emph{state variables\/};   
 $c,d \in \cal C$ for \emph{communication channels\/}, 
 $a, a_k \in \cal A$ for \emph{actuator devices\/}, 
 $s,s_k \in \cal S$ for \emph{sensors devices\/}.
\emph{Actuator names} are metavariables for actuator devices like
$\mathit{valve}$, $\mathit{light}$, etc. Similarly, \emph{sensor names}
are metavariables for sensor devices, e.g., a sensor
$\mathit{thermometer}$ that measures, with a given precision, a state
variable called $\mathit{temperature}$.
\emph{Values}, ranged
over by $v,v' \in \cal V$, are built from basic values, such as
Booleans, integers and real numbers; they also include names.

Given a generic set of names $\cal N $, we write $\mathbb{R}^{\cal N} $ to
denote the set of functions 
 assigning a
real value to each name in $\cal N$. For $\xi \in \mathbb{R} ^{\cal N}$,
$n \in \cal N$ and $v \in \mathbb{R} $, we write $\xi [n \mapsto v]$ to
denote the function $\psi \in \mathbb{R} ^{\cal N}$ such that
$\psi(m)=\xi(m)$, for any $m \neq n$, and $\psi(n)=v$.
For $\xi,\xi' \in \mathbb{R}^{\cal N}$, we write $\xi \leq \xi'$ if $\xi(x) \leq
\xi'(x)$, for any $x \in {\cal N}$.
Given   $\xi_1 \in \mathbb{R}^{{\cal N}_1} $ and  $\xi_2 \in \mathbb{R}^{{\cal N}_2} $
such that ${{\cal N}_1} \cap {{\cal N}_2}=\emptyset$,  we denote
with $\xi_1 \uplus \xi_2$ the function in
$\mathbb{R}^{{\cal N}_1 \cup {\cal N}_2} $
such that
$(\xi_1 \uplus \xi_2) (x)=\xi_1(x)$, if $x \in {{\cal N}_1} $, and 
$(\xi_1 \uplus \xi_2) (x)=\xi_2(x)$, if $x \in {{\cal N}_2} $.
Finally, given   $\xi  \in \mathbb{R}^{{\cal N} } $ and   a set of names
 ${\cal M} \subseteq {\cal N}$, we write  $\xi\restrict{\cal M}$ for the   restriction of function $\xi$ to
 the set 
${{\cal M} }$.

In \cname{}, a \emph{cyber-physical system} consists of two components: a \emph{physical environment} $E$ that encloses all physical aspects of a system (state variables, physical devices, evolution law, etc) and a
\emph{cyber component\/}, represented as a concurrent process $P$ that interacts with the physical devices (sensors and actuators) of the system, and can communicate, via channels, with other processes of the same \CPS{} or with processes of other \CPS{s}. 

 We write $\confCPS E P$ to denote the resulting \CPS, and use 
$M$ and $N$ to range over \CPS{s}. 
%
\enlargethispage{.3\baselineskip}
Let us  formally define physical environments.
%
%
\begin{definition}[Physical environment]
\label{def:physical-env}
Let $\hat{\mathcal{X}} \subseteq \mathcal{X}$ be a set of state variables,
$\hat{\mathcal{A}} \subseteq \mathcal{A}$ be a set of actuators, and
$\hat{\mathcal{S}} \subseteq \mathcal{S}$ be a set of sensors. A
\emph{physical environment} $E$ is 7-tuple
$\envCPS 
{\statefun{}} 
{\actuatorfun{}} 
{\uncertaintyfun{}}  
{\evolmap{}}
{\errorfun{}}  
{\measmap{}}   
{\invariantfun{}}
$,
where:
\begin{itemize}[noitemsep]
\item $\statefun{} \in \mathbb{R}^{\hat{\cal X}} $ is the
\emph{state function},
\item $\actuatorfun{} \in \mathbb{R}^{\hat{\cal A}} $ is the
\emph{actuator function},
\item $\uncertaintyfun{} \in \mathbb{R}^{\hat{\cal X}} $ is the
\emph{uncertainty function},
\item $\evolmap{}: \mathbb{R}^{\hat{\cal X}} \times
\mathbb{R}^{\hat{\cal A}} \times \mathbb{R}^{\hat{\cal X}} \rightarrow
2^{\mathbb{R}^{\hat{\cal X}} }$ is the \emph{evolution map}, 
\item $\errorfun{} \in \mathbb{R}^{\hat{\cal S}}$ is the
\emph{sensor-error function},
\item $\measmap{}: \mathbb{R}^{\hat{\cal X}} \times
\mathbb{R}^{\hat{\cal S}} \rightarrow 2^{\mathbb{R}^{\hat{\cal S}} }$ is
the \emph{measurement map}, 
\item $\invariantfun{}: \mathbb{R} ^{\hat{\cal X}}  
\rightarrow \{\true, \false \}$ is the \emph{invariant function}.
\end{itemize}
All the functions defining an environment are \emph{total functions\/}.
\end{definition}
 The
\emph{state function} $\statefun{}$ returns the current value (in
$\mathbb{R}$) associated to each state variable of the system. The
\emph{actuator function} $\actuatorfun{}$ returns the current value
associated to each actuator. The \emph{uncertainty function}
$\uncertaintyfun{}$ returns the uncertainty associated to each state
variable. Thus, given a state variable $x \in \hat{\cal X}$,
$\uncertaintyfun{}(x)$ returns the maximum distance between the real value
of $x$ and its representation in the model. 
 Both the state function and the actuator function are
supposed to change during the evolution of the system, whereas the
uncertainty function is supposed to be constant.

Given a state function, an actuator function, and an uncertainty function,
the \emph{evolution map} $\evolmap{}$ returns the set of next
\emph{admissible state functions}. This function models the
\emph{evolution law} of the physical system, where changes made on
actuators may reflect on state variables. Since we assume an uncertainty in
our models, the evolution map does not return a single state function but
a set of possible state functions. The evolution map is obviously
monotone with respect to uncertainty: if $\uncertaintyfun{} \leq
\uncertaintyfun'{}$ then $\evolmap{}(\statefun{}, \actuatorfun{},
\uncertaintyfun{}) \subseteq \evolmap{}(\statefun{}, \actuatorfun{},
\uncertaintyfun'{})$. Note also that, although the uncertainty function is
constant, it can be used in the evolution map in an arbitrary way (e.g.,
it could have a heavier weight when a state variable reaches extreme
values).

The \emph{sensor-error function} $\errorfun{}$ returns the maximum error
associated to each sensor in $\hat{\cal S}$. Again due to the presence of
the sensor-error function, the \emph{measurement map} $\measmap{}$, given
the current state function, returns a set of admissible measurement
functions rather than a single one.

Finally, the \emph{invariant function} $\invariantfun{}$ represents the
conditions that the state variables must satisfy to allow for the
evolution of the system. A \CPS{} whose state variables don't satisfy the
invariant is in \emph{deadlock}.

Let us now formalise in \cname{} the cyber components of \CPS{s}.
Our (logical) processes build on 
the \emph{timed process algebra TPL}~\cite{HR95} (basically CCS enriched with a discrete notion of time). We extend TPL with  two  constructs: one to read values detected at sensors, and one to write values on actuators. 
The remaining processes of the calculus are the same as those of TPL. 

\begin{definition}[Processes]
\emph{Processes} are defined by the grammar:\\[2pt]
\begin{math}
\begin{array}{rl}	
P,Q \Bdf & \nil \q\, \big| \q \tick.P \q \big| \q P \parallel Q \q 
\big| \q \timeout {\pi.P} {Q} 
\q \big|   \q \ifelse b P Q \q \big| \q P{\setminus} c  
\q \big| \q X \q\big| \q  \fix X P \, .
\end{array}
\end{math}
\end{definition}


We write $\nil$ for the \emph{terminated process}. The process $\tick.P$
sleeps for one time unit and then continues as $P$. We write $P \parallel
Q$ to denote the \emph{parallel composition} of concurrent processes 
$P$ and $Q$. The process $\timeout {\pi.P} Q$, with $\pi\in
\{\OUT{c}{v},\LIN{c}{x}, \rsens x s, \wact v a \}$, denotes
\emph{prefixing with timeout}. Thus, $\timeout{\OUT c v . P}Q$ sends the
value $v$ on channel $c$ and, after that, it continues as $P$; otherwise,
if no communication partner is available within one time unit, 
 it evolves into $Q$. The process $\timeout{\LIN c x.
P}Q$ is the obvious counterpart for receiving. $\timeout{\rsens x s.P}{Q}$
reads the value $v$ detected by the sensor $s$ and, after that, it
continues as $P$, where $x$ is replaced by $v$; otherwise, after one time
unit, it evolves into $Q$. $\timeout{\wact v a.P}{Q}$ writes the value $v$
on the actuator $a$ and, after that, it continues as $P$; otherwise, after
one time unit, it evolves into $Q$.
The process $P{\setminus}c$ is the channel restriction operator of CCS. It is quantified over the set $\cal C$ of communication channels but we often use the shorthand 
$P{\setminus}C$ to mean $P{\setminus}{c_1}{\setminus}{c_2}\cdots{\setminus}{c_n}$, for $C= \{ c_1, c_2, \ldots , c_n \}$. 
The process $\ifelse b P Q$ is the standard conditional, where $b$ is a decidable guard. For simiplicity, as in CCS, we identify process $\ifelse b P Q$ with $P$, if $b$ evaluates to true,  and $\ifelse b P Q$  with $Q$, if $b$ evaluates to false.
In processes of the form $\tick.Q$ and $\timeout {\pi.P} Q$, the occurrence of $Q$ is said to be \emph{time-guarded}. The process $\fix X P$ denotes \emph{time-guarded recursion} as all occurrences of the process variable $X$ may only occur time-guarded in $P$.

In 
the two constructs $\timeout{\LIN c x. P}Q$ and $\timeout{ \rsens x s. P}Q$,
 the variable $x$ is said to be
\emph{bound\/}. Similarly, the process variable $X$ is bound in $\fix X P$.
This gives rise to the standard notions of \emph{free/bound (process) variables} 
and \emph{$\alpha$-conversion}.
We identify processes  up to $\alpha$-conversion (similarly, we identify \CPS{s} up to renaming 
of state variables, sensor names, and actuator names).
A term is \emph{closed} if it does not contain free (process) variables, and 
we assume to always work with closed processes: the absence of free variables is
preserved at run-time. As further notation, we write $T{\subst v x}$ for the substitution of
the variable $x$ with the value $v$ in any expression $T$ of our language.
Similarly, $T{\subst P X}$ is the substitution of the process variable $X$
with the process $P$ in $T$. 

The syntax of our \CPS{s} is slightly too permissive as a process might use
sensors and/or actuators  which are not defined in the physical environment. 
\begin{definition}[Well-formedness]
\label{def:well-formedness}
Given a process $P$ and an environment $E= \envCPS 
{\statefun{}} 
{\actuatorfun{}} 
{\uncertaintyfun{}}  
{\evolmap{}}
{\errorfun{}}  
{\measmap{}}   
{\invariantfun{}}
$, the \CPS{} $\confCPS E P$ is 
\emph{well-formed} if: (i) for any sensor $s$ mentioned in $P$, the function $\errorfun{}$ is defined in $s$; (ii) for any actuator $a$ mentioned in $P$, the function $\actuatorfun{}$ is defined in $a$. 
\end{definition}
Hereafter, we will always work with well-formed networks.

Finally, we assume a number of \emph{notational conventions}.  
We write $\pi.P$ instead of $\fix{X}\timeout{\pi.P}X$, when $X$ does not occur in $P$. 
We write $\OUTCCS c$ (resp.\ $\LINCCS c$) when channel  $c$ is used for pure synchronisation.
For $k\geq 0$, we write $\tick^{k}.P$ as a shorthand for $\tick.\tick. \ldots \tick.P$, where the prefix $\tick$ appears $k$ consecutive times. 
Given $M = \confCPS E  P$, we write 
$M\parallel Q$ for $\confCPS E  { (P\parallel Q) }$, and 
$M {\setminus} c$ for $\confCPS E {P{\setminus}c}$.

\subsection{Labelled Transition Semantics}
\label{lab_sem}

In this section, we provide the dynamics of \cname{}  
 in terms of a \emph{labelled
transition system (LTS)} in the SOS style of Plotkin. 
In \autoref{def:op_env}, for convenience, we define some auxiliary
operators on environments.
\begin{definition}
\label{def:op_env}
Let $E = \envCPS {\statefun{}} 
{\actuatorfun{}} 
{\uncertaintyfun{}}  
{\evolmap{}}
{\errorfun{}}  
{\measmap{}}   
{\invariantfun{}}$. 
\vspace*{-2.5mm}
\begin{itemize}[noitemsep]
\item $\mathit{read\_sensor}(E,s)  =  \{\xi(s) : \xi \in \measmap{}(\statefun{},\errorfun{})\}$,
\item $\mathit{update\_act}(E,a,v)  = 
\envCPS {\statefun{}} {\actuatorfun{}[a {\mapsto} v]}
{\uncertaintyfun{}} {\evolmap{}} {\errorfun{}} {\measmap{}}  {\allowbreak \invariantfun{}}$,
\item $\mathit{next}(E)  = \bigcup_{\xi \in   \evolmap{}(\statefun{}, \actuatorfun{}, \uncertaintyfun{})} 
\{ \envCPS {\xi} 
{\actuatorfun{}} 
{\uncertaintyfun{}}  
{\evolmap{}}
{\errorfun{}}  
{\measmap{}}   
{\invariantfun{}} \}$,
\item $\invariantfun{}(E)  = \invariantfun{}(\statefun{})$.
\end{itemize}
%
%
%
\end{definition}
 The operator 
$\mathit{read\_sensor}(E,s)$ returns the set of possible measurements detected by sensor $s$ in the environment $E$; it returns a set of possible values rather than a single value due to the error $\errorfun{}(s)$ of sensor $s$. 
$\mathit{update\_act}(E,a,v)$ returns the new environment in which the
actuator function is updated in such a manner to associate the actuator
$a$ with the value $v$.
 $\mathit{next}(E)$ returns the set of the next
admissible environments reachable from $E$, by an application of the
evolution map. $\invariantfun{}(E)$ checks whether the state variables
satisfy the invariant  (here, with an
abuse of notation, we overload the meaning of the function
$\invariantfun{}$).

\begin{table}[t]
\begin{displaymath}
\begin{array}{l@{\hspace*{3mm}}l}
\Txiom{Outp}
{-}
{ { \timeout{\OUT c v .P}Q } \trans{\out c v}   P}
&
\Txiom{Inpp}
{-}
{ { \timeout{\LIN c x .P}Q } \trans{\inp c v}    {P{\subst v x}}  }

\\[14pt]
\Txiom{Write}
{ - }  
{ { \timeout{\wact v a .P}Q } \trans{\snda a v}   P}
&
\Txiom{Read}
{  - } 
{ { \timeout{\rsens x s .P}Q } \trans{\rcva s v}    {P{\subst v x}}  }
\\[14pt]
\Txiom{Com}
{ P \trans{\out c v}  { P'}  \Q  Q \trans{\inp c v}  { Q'} }
{ P \parallel  Q \trans{\tau}  {P'\parallel Q'}}
&
\Txiom{Par}
{ P \trans{\lambda}  P' \Q \lambda \neq  \tick }
{ {P\parallel Q} \trans{\lambda} {P'\parallel Q}}
\\[14pt]
\Txiom{ChnRes}{P \trans{\lambda} P' \Q \lambda \not\in \{ {\inp c v}, {\out c v} \}}{P {\setminus}c \trans{\lambda} {P'}{\setminus}c}
&
\Txiom{Rec}
{  {P{\subst {\fix{X}P} X}} \trans{\lambda}  Q}
{ {\fix{X}P}  \trans{\lambda}  Q}
\\[14pt]
\Txiom{TimeNil}{-}
{ \nil \trans{\tick}  \nil}
& 
\Txiom{Delay}
{-}
{  { \tick.P} \trans{\tick}  P}
\\[14pt]
\Txiom{Timeout}
{-}
{  {\timeout{\pi.P}{Q} }   \trans{\tick}  Q}
&
\Txiom{TimePar}
{
  P \trans{\tick}  {P'}  \q\,
   Q \trans{\tick} {Q'}  \q\, P \parallel Q \not\!\!\!\trans{\tau}
}
{
  {P \parallel Q}   \trans{\tick}  { P' \parallel Q'}
}
\end{array}
\end{displaymath}
\caption{LTS for processes}
\label{tab:lts_processes} 
\end{table}

\begin{table}[t]
\begin{displaymath}
\begin{array}{c}
\Txiom{Out}
{P \trans{\out c v}  P' \Q \invariantfun{}(E)}
{\confCPS E  P   \trans{\out c v}   \confCPS E  {P' }}
\Q\Q\Q\Q
\Txiom{Inp}
{P  \trans{\inp c v}  P'\Q \invariantfun{}(E)}
{\confCPS E  P    \trans{\inp c v}  \confCPS E {P' }}
\\[16pt]
\Txiom{SensRead}{P \trans{\rcva s v} P'  \Q \invariantfun{}(E)\Q
\mbox{\small{$v \in \mathit{read\_sensor}(E,s)$}} 
}
{\confCPS E  P \trans{\tau} \confCPS E {P'}}
\\[16pt]
\Txiom{ActWrite}{P \trans{\snda a v} {P'}  \Q   \invariantfun{}(E) \Q
{E'}=\operatorname{update\_act}(E,a,v)}
{\confCPS E  P \trans{\tau} \confCPS {E'}{P'}}
\\[16pt]
\Txiom{Tau}{P \trans{\tau} P' \Q \invariantfun{}(E)}
{ \confCPS E  P \trans{\tau} \confCPS E  {P'}}
\Q\;
\Txiom{Time}{ P \trans{\tick} {P'} \q\;
\confCPS E  P \ntrans{\tau} \q\;
\invariantfun{}(E) \q\; E' \in \operatorname{next}(E) }
{\confCPS E  P \trans{\tick} \confCPS {E'}  {P'}}
\end{array}
\end{displaymath}
\caption{LTS for \CPS{s}}
\label{tab:lts_systems} 
\end{table}
In \autoref{tab:lts_processes}, we provide transition rules for processes.
Here, the meta-variable $\lambda$ ranges over labels in the set 
 $\{\tick,
\tau, {\out c v}, {\inp c v}, \allowbreak \snda a v,\rcva s v \}$. Rules
\rulename{Outp}, \rulename{Inpp} and \rulename{Com} serve to model channel
communication, on some channel $c$. Rules~\rulename{Write} denotes the 
writing of some data $v$ on an actuator $a$. Rule~\rulename{Read} denotes the reading of some data $v$ via a sensor $s$. Rule \rulename{Par} propagates untimed actions over parallel components.
Rules  \rulename{ChnRes} and \rulename{Rec} are the standard rules for 
channel restriction and recursion, respectively. The following four rules 
are standard, and model the passage of one time unit.  The symmetric counterparts of rules \rulename{Com} 
and \rulename{Par} are obvious and thus omitted from the table.\\
\indent 
In \autoref{tab:lts_systems}, we lift the transition rules from processes
to systems. All rules have a common premise $\invariantfun{}(E)$: 
a \CPS{} can evolve only if the invariant is
satisfied, otherwise it is deadlocked. Here, actions, ranged over by $\alpha$, are in the set $\{\tau,
{\out c v}, {\inp c v}, \tick \}$. These actions 
denote: non-observable activities ($\tau$); 
observable logical activities, \emph{i.e.\/},
channel transmission (${\out c v}$ and ${\inp c v}$); the passage
of  time ($\tick$). 
Rules \rulename{Out} and
\rulename{Inp} model transmission and reception, with an external system,
on a channel $c$. Rule \rulename{SensRead} models the reading of the
current data detected at sensor $s$. 
 Rule~\rulename{ActWrite} models the writing of a value $v$ on an
 actuator $a$. 
Rule \rulename{Tau} lifts non-observable actions from processes to
systems.  A similar lifting occurs in rule
\rulename{Time} for timed actions, where $\operatorname{next}(E)$ returns
the set of possible environments for the next time slot. Thus, by an
application of rule \rulename{Time} a \CPS{} moves to the next physical
state, in the next time slot. \\
\indent 
Now, having defined the actions that can be performed by a \CPS{}, we can
easily concatenate these actions to define \emph{execution traces\/}.
 Formally, given a trace  $t = \alpha_1 \ldots
\alpha_n$, we will write $\trans{t}$ as an abbreviation for
$\trans{\alpha_1}\ldots \trans{\alpha_n}$. \\
\indent 
Below,  we report a few  desirable time properties which hold in our calculus:
(a) \emph{time determinism\/}, (b) \emph{maximal progress\/}, (c)
\emph{patience\/}, 
 and (d) \emph{well-timedness\/} (symbol $\equiv$ denotes standard \emph{structural congruence}
for timed processes~\cite{Mil91,MBS11}).
\begin{theorem}[Time properties] 
\label{prop:time}	
Let $M = \confCPS E P$. 
\begin{itemize}
\item[(a)]
\label{prop:timed}
If $M \trans \tick \confCPS {\hat{E}} {Q} $ and 
$M\trans \tick \confCPS {\tilde{E}} {R}$, then $\{ \hat{E}, \tilde{E} \} 
\subseteq \operatorname{next}{(E)}$ and 
$Q \equiv  R$.
\item[(b)]
\label{prop:maxprog}
If $M\trans \tau M'$  then there is no $M''$ such that  $M\trans \tick M''$. 
\item[(c)]
\label{prop:patience}
If $M \trans \tick M'$  for no  $M'$ 
 then either  $ \operatorname{next}(E )=\emptyset$ or
 $\invariantfun{(M)}=\false$ or
there is $N$ such that $M \trans \tau N$. 

\item[(d)]
\label{prop:welltime}
For any $M$  there is a $ k\in\mathbb{N}$ such that  if
$M  \trans{\alpha_1}\dots \trans{\alpha_n} N$, with $\alpha_i \neq \tick$,   then $n\leq k$. 
\end{itemize}
\end{theorem}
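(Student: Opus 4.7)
The overall strategy is to reduce each system-level claim to a process-level counterpart by inspecting which rules in \autoref{tab:lts_systems} can emit the label in question, then to attack the process-level statement by induction on the transition derivation or on the structure of $P$.

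For \emph{time determinism} (a), rule \rulename{Time} is the unique source of a $\tick$-step at the system level, and its conclusion explicitly forces the target environment to lie in $\operatorname{next}(E)$, yielding $\{\hat E,\tilde E\}\subseteq \operatorname{next}(E)$ without any extra work. For the process residual, I would show by induction on the derivation of $P \trans{\tick} Q$ in \autoref{tab:lts_processes} that $\tick$-residuals are uniquely determined up to $\equiv$: each of \rulename{TimeNil}, \rulename{Delay}, \rulename{Timeout}, \rulename{ChnRes} and \rulename{Rec} admits at most one outcome given the top-level form of its subject, and \rulename{TimePar} propagates uniqueness to parallel components (the negative premise in \rulename{TimePar} rules out any choice between ticking and synchronizing). \emph{Maximal progress} (b) follows directly from the negative premise of \rulename{Time}, which literally forbids a $\tick$-step as soon as a $\tau$-step is available.

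For \emph{patience} (c), I would first establish the process-level statement that every closed process satisfies $P \trans{\tick}$ or $P \trans{\tau}$. Structural induction: $\nil$, $\tick.P$ and $\timeout{\pi.P}Q$ all tick via \rulename{TimeNil}, \rulename{Delay} and \rulename{Timeout}; restriction inherits the property from its body, since both $\tick$ and $\tau$ meet the side condition of \rulename{ChnRes}; a parallel composition either ticks via \rulename{TimePar} when no synchronization is possible, or exhibits a $\tau$ through \rulename{Com} or \rulename{Par}. The recursive case $\fix X P$ is the delicate step: time-guardedness forces every occurrence of $X$ in $P$ to sit under a $\tick$ or a $\timeout$, so the unfolding $P\{\fix X P/X\}$ retains the same outermost constructor as $P$ and can be analysed according to $P$'s top-level form without genuinely recurring on a term of the same size. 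Lifting to the system: assuming $\invariantfun(E)$ holds and $\operatorname{next}(E)\neq\emptyset$, the only premise of \rulename{Time} that can fail is the negative one, giving $M \trans{\tau}$; and if instead the process-level premise $P \trans{\tick}$ fails, process-level patience yields $P \trans{\tau}$, which rule \rulename{Tau} lifts to $M \trans{\tau}$.

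For \emph{well-timedness} (d), I would introduce a weight function $w$ on processes counting action prefixes reachable without an intervening $\tick$, by the clauses $w(\nil)=w(X)=w(\tick.P)=0$, $w(\timeout{\pi.P}Q)=1+w(P)$, $w(P\parallel Q)=w(P)+w(Q)$, $w(P{\setminus}c)=w(P)$, and $w(\fix X P)=w(P)$. Time-guardedness yields the substitution invariance $w(P\{\fix X P/X\})=w(P)$, since every inserted copy of the recursive body sits in a position that already contributes $0$ to $w$. An induction on transition derivations then shows that each non-$\tick$ process step strictly decreases $w$ (by $2$ in the \rulename{Com} case, by $1$ otherwise), and the same holds for system-level $\tau$ actions, which are produced only by lifting such process-level steps via \rulename{Tau}, \rulename{SensRead} or \rulename{ActWrite}; setting $k:=w(P)$ then bounds the length of any $\tick$-free trace out of $\confCPS E P$. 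The main obstacle common to parts (c) and (d) is the recursion case, which both arguments dispatch by exploiting time-guardedness to confine every unfolding of $\fix X P$ to a position that neither contributes to $w$ nor obstructs a $\tick$-step.
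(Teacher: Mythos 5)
Your proposal is correct, and at the system level it follows exactly the paper's route: identify \rulename{Time} as the unique source of $\tick$-transitions in \autoref{tab:lts_systems} (which immediately gives the $\operatorname{next}(E)$ part of (a) and, via its negative premise, all of (b)), and reduce (a), (c), (d) to process-level counterparts. Where you genuinely diverge is that the paper does not prove those process-level facts at all: it packages them as \autoref{prop:time2}, imported from TPL~\cite{HR95,MBS11} with the remark that the proofs carry over to the variant with sensor/actuator prefixes. You instead supply the arguments — induction on derivations for uniqueness of $\tick$-residuals, structural induction for process patience, and a weight function $w$ with $w(\fix X P)=w(P)$ and substitution-invariance under time-guarded unfolding for well-timedness. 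This makes your proof self-contained where the paper's is by citation, and the weight-function argument in particular is a clean way to see why time-guardedness is exactly the hypothesis needed for (d): every occurrence of $X$ sits in a position (the body of a $\tick.Q$ or the timeout branch of $\timeout{\pi.P}{Q}$) that contributes $0$ to $w$, so unfolding cannot inflate the bound. Two small points to tighten: in (d) the trace may also contain the visible actions $\out c v$ and $\inp c v$, not only $\tau$, so you should note that rules \rulename{Out} and \rulename{Inp} likewise lift weight-decreasing process steps (they do, so nothing breaks); and in (c) the system-level lifting should be phrased, as in the paper, as a case split on which premise of \rulename{Time} fails when $\invariantfun(E)=\true$ and $\operatorname{next}(E)\neq\emptyset$ — either the negative premise fails, giving $M\trans{\tau}$ directly, or $P\ntrans{\tick}$, in which case process patience plus rule \rulename{Tau} gives $M\trans{\tau}$.
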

 \emph{Well-timedness}~\cite{MBS11,CHM15} ensures the absence of infinite 
instantaneous 
traces  which would prevent the passage of time, and hence the 
physical evolution of a \CPS{}.


\section{Bisimulation}
\label{sec:bisimulation}

Once defined the labelled transition semantics, we are ready to define our 
bisimulation-based behavioural equality for \CPS{s}. We recall that the only 
\emph{observable activities} in \cname{} are: 
time passing and channel communication.  
 As a consequence, the capability to observe physical events 
depends on the capability of the cyber components to recognise those events by acting on sensors and actuators, and then signalling
them using (unrestricted) channels.

We adopt a standard notation for weak transitions: we write $\Trans{}$ for the reflexive and transitive closure of $\tau$-actions, namely $(\trans{\tau})^*$, whereas $\Trans{\alpha}$ means $\ttranst{\alpha}$, and finally $\ttrans{\hat{\alpha}}$ denotes $\Trans{}$ if $\alpha=\tau$ and $\ttrans{\alpha}$ otherwise.
\begin{definition}[Bisimulation]
\label{def:bisimulation}
A binary symmetric relation $\RR$ over \CPS{s} is a bisimulation if 
 $M \RRr N$ and $M \trans{\alpha}
M'$ implies that there exists $N'$ such that $N\Trans{\hat{\alpha}}N'$
and $M' \RRr N'$. We say that $M$ and $N$ are bisimilar, written $M \approx N$, 
if $M \RRr N$ for some bisimulation $\RR$. 
\end{definition}

A main result of the paper is that our bisimilarity can be used to 
compare \CPS{s} in a compositional manner. In particular, our bisimilarity is
 preserved   by 
parallel composition of (non-interfering) \CPS{s}, 
by parallel composition of (non-interfering) processes,
 and by channel restriction. 

Two \CPS{s} do not interfere with each other 
if they have a disjoint physical plant. Thus, let   
$ E^i = \envCPS
{\statefun^i{} } 
{\actuatorfun^i{} } 
{\uncertaintyfun^i{} }  
{\evolmap^i{} }
{\errorfun^i{} }  
{\measmap^i{} }   
{\invariantfun^i{} }
$
with sensors in   $\hat{\mathcal{S}}_i$,  actuators in  $\hat{\mathcal{A}}_i$, 
  and state variables in 
  $\hat{\mathcal{X}}_i$,  for $i \in \{ 1,2\}$.
If $\hat{\mathcal{S}}_1 \cap \hat{\mathcal{S}}_2=\emptyset$ and 
$\hat{\mathcal{A}}_1 \cap \hat{\mathcal{A}}_2=\emptyset$ and
$\hat{\mathcal{X}}_1 \cap \hat{\mathcal{X}}_2=\emptyset$, then
we define  the \emph{disjoint union} of the environments $E_1$ and $E_2$,
written $E_1 \uplus E_2$,  to be the environment 
$   \envCPS
{\statefun{} } 
{\actuatorfun{} } 
{\uncertaintyfun{} }  
{\evolmap{} }
{\errorfun{} }  
{\measmap{} }   
{\invariantfun{} }
$
such that: 
${\statefun{} }=  \statefun^1{} \uplus \statefun^2{}  $, 
${\actuatorfun{} }=\actuatorfun ^1{} \uplus \actuatorfun^2{}  $, 
${\uncertaintyfun{} }=\uncertaintyfun^1{} \uplus \uncertaintyfun^2{}  $, 
${\errorfun{} } = \errorfun^1{} \uplus \errorfun^2{}  $, and 
\begin{center}
\begin{math}
\begin{array}{rcl}
{\evolmap{} }(\xi,\psi , \phi) & = & \{ \xi'= \xi_1 \uplus \xi_2 \,:\,
\xi_i \in {\evolmap^i{} }(\xi\restrict{\hat{\mathcal{X}}_i},\psi\restrict{\hat{\mathcal{A}}_i} , 
\phi\restrict{\hat{\mathcal{X}}_i}), \mbox{ for }
i \in \{ 1, 2 \} \} \\
{\measmap{} } (\xi,\psi)& =  & \{ \xi' =   \xi_1 \uplus \xi_2 \,:\,
\xi_i \in
{\measmap^i{} }(\xi\restrict{\hat{\mathcal{X}}_i},\psi\restrict{\hat{\mathcal{S}}_i} ), \mbox{ for }
i \in \{ 1,2 \} \}  \\
 \invariantfun{} (\xi ) & = &\invariantfun^1{} (\xi\restrict{\hat{\mathcal{X}}_1 } )\wedge \invariantfun^2{} (\xi\restrict{\hat{\mathcal{X}}_2} ) \enspace . 
\end{array}
\end{math}
\end{center}

\begin{definition}[Non-interfering {\CPS}s]
Let $M_i = \confCPS {E_i}{P_i}$, for $i \in \{ 1 , 2 \}$. We say that 
$M_1$ and $M_2$ do not interfere with each other if
$E_1$ and $E_2$ have disjoint sets of state variables, sensors and actuators. In this case,  we write $M_1 \uplus M_2$ to denote the \CPS{}  defined as $\confCPS {(E_1\uplus E_2)}{(P_1\parallel P_2)}$.
\end{definition}

A similar but simpler definition can be given for processes.  
Let $M=\confCPS E P$, 
 a non-interfering process $Q$ is a process which does 
not interfere with the plant $E$ as it never accesses its
sensors and/or actuators. Thus, in the system $M \parallel Q$
the process $Q$ cannot interfere with the physical evolution of $M$. 
However, process $Q$ can definitely affect the observable behaviour
of the whole system by communicating on channels. Notice that, 
as  we only consider well-formed \CPS{s} (\autoref{def:well-formedness}), a non-interfering 
processes is basically a (pure) TPL process~\cite{HR95}. 
\begin{definition}[Non-interfering processes]
A process $P$ is called \emph{non-interfering} if it never acts on 
sensors and/or actuators.
\end{definition}

Now, everything is in place to prove the compositionality 
of our bisimilarity $\approx$. 
\begin{theorem}[Congruence results]
Let $M$ and $N$ be two \CPS{s}. 
\label{thm:congruence}
 \begin{enumerate}
\item
\label{thm:congruence1}
 $M \approx N$ implies $M \uplus O \approx N \uplus O$, for any 
non-interfering \CPS{} $O$ 
\item
\label{thm:congruence2}
 $M \approx N$ implies $M \parallel P \approx N \parallel P$, for any 
non-interfering process $P$ 
\item 
\label{thm:congruence3}
$M \approx N$ implies $M {\setminus} c \; \approx \; M {\setminus} c$, for 
any channel $c$.

\end{enumerate} 
\end{theorem}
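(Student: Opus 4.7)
The plan is to prove each item by exhibiting an explicit candidate relation and verifying the bisimulation transfer property by case analysis on the derivation of each outgoing transition, invoking the hypothesis $M \approx N$ whenever the transition originates from an $M$-component. For item~\ref{thm:congruence1} take
\[
\mathcal{R}_1 \;=\; \{(M_1 \uplus O,\, M_2 \uplus O) : M_1 \approx M_2,\ O \text{ non-interfering with both}\};
\]
for item~\ref{thm:congruence2}, $\mathcal{R}_2 = \{(M_1 \parallel P,\, M_2 \parallel P) : M_1 \approx M_2,\ P \text{ non-interfering}\}$; and for item~\ref{thm:congruence3}, $\mathcal{R}_3 = \{(M_1 \setminus c,\, M_2 \setminus c) : M_1 \approx M_2\}$. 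Each relation is symmetric by construction, and each already contains the pair of interest, so it suffices to verify the transfer property.

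The routine part of the case analysis splits according to which syntactic component originates the outgoing action. When the action comes from $M_1$ alone (rules \textsf{Out}, \textsf{Inp}, \textsf{SensRead}, \textsf{ActWrite}, \textsf{Tau}), disjointness of plants (item~\ref{thm:congruence1}) or the non-interference assumption (items~\ref{thm:congruence2} and~\ref{thm:congruence3}) guarantees $M_1 \trans{\alpha} M_1'$ in isolation; the bisimilarity $M_1 \approx M_2$ then supplies a weak match $M_2 \ttrans{\hat{\alpha}} M_2'$ whose individual steps lift to the composite verbatim via \textsf{Par} (for process-level $\tau$s) and \textsf{Tau} (for system-level $\tau$s), and survive restriction via \textsf{ChnRes} since $\tau \notin \{\out c v,\inp c v\}$. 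Actions fired in $O$ or $P$ alone are matched directly with the identical step on the other side. A synchronizing composite $\tau$ produced by \textsf{Com} corresponds to complementary $\{\out c v,\inp c v\}$-actions in $M_1$ and the context; we use bisimilarity on the $M$-side to reach the complementary offer and then recombine with rule \textsf{Com} at the appropriate midpoint of the weak trace. Item~\ref{thm:congruence3} is the cleanest, since \textsf{ChnRes} filters only the labels $\out c v$ and $\inp c v$: both $\tau$ and $\tick$ propagate unchanged, and the $\tau$-enabledness of $M_i\setminus c$ coincides with that of $M_i$.

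The main obstacle is matching a composite $\tick$ for items~\ref{thm:congruence1} and~\ref{thm:congruence2}. By \textsf{Time} and \textsf{TimePar}, $M_1 \uplus O \trans{\tick} M_1' \uplus O'$ requires $M_1 \trans{\tick} M_1'$, $O \trans{\tick} O'$, \emph{and} the absence of any composite $\tau$, including any fresh synchronization between the two sides. Although $M_1 \approx M_2$ delivers $M_2 \ttrans{\tick} M_2'$, the intermediate states along this weak trace may offer channel communication with $O$ that $M_1 \uplus O$ did not, spawning a spurious composite $\tau$ that would block the desired tick. The plan is to absorb any such synchronization into the weak $\tau$-closure: each fresh communication offer exposed on the $M_2$-side can, by $M_1 \approx M_2$, be answered by a matching weak trace on the $M_1$-side, so $\mathcal{R}_1$ (and $\mathcal{R}_2$) is preserved along those extra $\tau$-steps until a jointly $\tau$-stable pair is reached from which the tick fires. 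The time properties of Theorem~\ref{prop:time} are pivotal here: maximal progress certifies $\tau$-stability at the tick point, patience rules out time-deadlock, and time determinism together with the definition of $E_1 \uplus E_2$ ensures that $\operatorname{next}(E_{M_2} \uplus E_O) = \{\hat E_M \uplus \hat E_O : \hat E_M \in \operatorname{next}(E_{M_2}),\ \hat E_O \in \operatorname{next}(E_O)\}$, so the composite environment reached by the matched tick lies in the $\operatorname{next}$ set demanded by rule \textsf{Time}.
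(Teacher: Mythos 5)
There are two genuine gaps, both located in the hard case of item~1 (matching a $\tick$ of the composite system).

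First, your candidate relation $\mathcal{R}_1$ is not closed under the transfer property, because of invariants. Suppose $M_1 \uplus O \trans{\tick} M_1' \uplus O'$ where the environment reached by $O'$ violates its invariant. Bisimilarity gives $M_2 \Trans{} N_1 \trans{\tick} N_2 \Trans{} N'$ with $M_1' \approx N'$, but the composite $N_2 \uplus O'$ is deadlocked (the invariant of a disjoint union is the conjunction of the component invariants), so the trailing $\Trans{}$ cannot be lifted to the composite and the weak match must stop at $N_2 \uplus O'$. The resulting pair $(M_1' \uplus O',\, N_2 \uplus O')$ consists of two deadlocked systems but need not lie in $\mathcal{R}_1$, since in general $M_1' \not\approx N_2$. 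The paper repairs this by proving the bisimulation property for $\mathcal{R}_1 \cup \{(M,N) : \invariantfun{}(M)=\invariantfun{}(N)=\false\}$, and it also needs the auxiliary fact that bisimilar \CPS{s} have equal invariants in order to fire the composite rules on the $N$-side when the action originates in $O$. This is precisely the point the paper flags as making the proof ``non standard''.

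Second, your plan of ``absorbing'' spurious synchronizations between the $M_2$-side and $O$ into the weak $\tau$-closure does not work. You are matching one fixed transition $M_1 \uplus O \trans{\tick} M_1' \uplus O'$: any extra \textsf{Com}-step between an intermediate state on the $M_2$-side and $O$ changes $O$ on the right only, so the pair reached after the tick would carry different $O$-components on the two sides and fall outside $\mathcal{R}_1$; you cannot also advance the left-hand side, which has already committed to its move. What is actually needed --- and what the paper proves --- is that no such synchronization can arise at the pre-tick state $N_1$: rule \textsf{Time} forces $M_1 \uplus O \ntrans{\tau}$, hence $M_1$ is itself $\tau$-stable and $M_1 \approx N_1$; if $N_1$ could communicate with $O$ on some channel, then $M_1$ could weakly offer the complementary action, yielding $M_1 \uplus O \Trans{}\trans{\tau}$ and contradicting the premise of \textsf{Time}. (A minor point: the decomposition $\operatorname{next}(E_1 \uplus E_2)=\{E_1'\uplus E_2' : E_i'\in\operatorname{next}(E_i)\}$ is a property of $\uplus$ on environments, not a consequence of time determinism.) Your direct relation for item~2, where the paper instead reduces to item~1 via an empty environment, and your treatment of item~3 are fine in outline but inherit the same two defects through the parallel component.
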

The presence of invariants in the definition of physical environment
makes the proof of the second item of the theorem above
non standard.  

As we will see in the next section, these compositional properties will 
be very useful when reasoning about complex systems.

\section{Case study}
\label{sec:case-study}
In this section, we 
model  in \cname{} an engine, called $\mathit{Eng}$, whose temperature is
maintained within a specific range by means of a cooling system. The
physical environment $\env$ of the engine 
  is constituted by: (i) a state variable
$\mathit{temp}$ containing the current temperature of the engine; (ii) an
actuator $\mathit{cool}$ to turn on/off the cooling system; (iii) a sensor
$s_{\mathrm{t}}$ (such as a thermometer or a thermocouple) measuring the temperature
of the engine; (iv) an uncertainty $\delta=0.4$ associated to the only
variable $\mathit{temp}$; (v) a simple evolution law that increases
(resp., decreases) the value of $\mathit{temp}$ of one degree per time
unit if the cooling system is inactive (resp., active) --- the evolution
law is obviously affected by the uncertainty $\delta$; (vi) an error
$\epsilon =0.1$ associated to the only sensor $s_{\mathrm{t}}$; (vii) a measurement
map to get the values detected by sensor $s_{\mathrm{t}}$, up to its error
$\epsilon$; (viii) an invariant function saying that the system gets
faulty when the temperature of the engine gets out of the range $[0, 30]$.


Formally,  $\env = \envCPS 
{\statefun{}} 
{\actuatorfun{}} 
{\uncertaintyfun{}}  
{\evolmap{}}
{\errorfun{}}  
{\measmap{}}   
{\invariantfun{}}$ with:
\begin{itemize}[noitemsep]
\item $\statefun{} \in \mathbb{R} ^{\{\mathit{temp}\} }$ and 
$\statefun{}(\mathit{temp})=0$;
\item $\actuatorfun{} \in \mathbb{R} ^{\{\mathit{cool}\} }$ and
$\actuatorfun{}(\mathit{cool})=\off$; for the sake of simplicity, we can
assume $\actuatorfun{}$ to be a mapping $\{ \mathit{cool} \} \rightarrow
\{ \on , \off\}$ such that $\actuatorfun{}(\mathit{cool})= \off$ if
$\actuatorfun{}(\mathit{cool}) \geq 0$, and $\actuatorfun{}(\mathit{cool})= \on$ if
$\actuatorfun{}(\mathit{cool}) < 0$;

\item $\uncertaintyfun{} \in \mathbb{R} ^{\{\mathit{temp}\} }$ and
$\uncertaintyfun{}(\mathit{temp})=0.4=\delta$;

\item $\evolmap{}( \statefun^i{}, \actuatorfun^i{}, \uncertaintyfun{}) =
\big \{ \, \xi : \xi(\mathit{temp}) = \statefun^i{}(\mathit{temp}) + 
\mathit{\mathit{heat}}(\actuatorfun^i{},\allowbreak \mathit{cool}) + \gamma \;\,
\wedge \;\, \gamma \in [- \delta, + \delta]  \, \big \}$,
where $\mathit{heat}(\actuatorfun^i{},\mathit{cool})=-1$ if
$\actuatorfun^i{}(\mathit{cool}) = \on$ (active cooling), and
$\mathit{heat}(\actuatorfun^i{},\mathit{cool})=+1$ if
$\actuatorfun^i{}(\mathit{cool}) = \off$ (inactive cooling);

\item $\errorfun{} \in \mathbb{R}^{\{s_{\mathrm{t}} \}}$ and
$\errorfun(s_{\mathrm{t}})= 0.1=\epsilon$;

\item $\measmap{}(\statefun^i{}, \errorfun{}) = \big \{ \xi :
\xi(s_{\mathrm{t}}) \in [ \statefun^i{}(\mathit{temp}){-}\epsilon \, , \, \allowbreak
\statefun^i{}(\mathit{temp}){+} \epsilon ]  \big \}$;

\item $\invariantfun{}(\statefun{})=\true$ if $0 \leq \statefun{}(\mathit{temp})\leq 30$; \allowbreak $\invariantfun{}(\statefun{})=\false$, otherwise. 
\end{itemize}


The cyber component of $\mathit{Eng}$ consists of a process $\mathit{Ctrl}$ 
which models the controller activity. Intuitively, process $\mathit{Ctrl}$ senses the temperature of the engine at each time interval. When the sensed temperature is above $10$, the controller activates the coolant. The cooling activity is maintained for $5$ consecutive time units. After that time, if the temperature does not drop below $10$ then the controller transmits its $\mathit{ID}$ on a specific channel for signalling a  $\mathit{warning}$, it keeps cooling  for another $5$ time units, and then checks again the sensed temperature; otherwise, if the 
sensed temperature is not above the threshold $10$, the controller turns off the cooling and moves to the next time interval.
 Formally, 
\begin{displaymath}
\begin{array}{rcl}
\mathit{Ctrl} & \; = \; & \fix{X} \rsens x {s_{\operatorname{t}}} . 
\ifelse {x>10}
 { \mathit{Cooling}}
 {  \tick.X } \\[1pt]
\mathit{Cooling} & \;  = \;  &   \wact{\on}{\mathit{cool}}. \fix{Y}
 \tick^5 .   \rsens x {s_{\operatorname{t}}} . \\
&&
\ifelse {x>10} {\OUT{\mathit{warning}}{\mathrm{ID}}.Y}
{\wact{\off}{\mathit{cool}}.\tick.X } \enspace . 
 \end{array}
\end{displaymath}
The whole engine is defined as: 
\begin{math}
\mathit{Eng} \: = \:  \confCPS {\env} { \mathit{Ctrl} } \,,
\end{math}
where $\env$ is the physical environment defined before.

Our operational semantics allows us to formally prove a number of 
\emph{run-time properties\/} of our engine. 
For instance,  the following proposition says that our engine 
never reaches a warning state and  never deadlocks.  
never reaches a warning state.  
\begin{proposition} 
\label{prop:sys}
Let $\mathit{Eng}$ be the \CPS{} defined before. 
If $\mathit{Eng} \trans{\alpha_1} \ldots \trans{\alpha_n} \mathit{Eng}'$, for some $\mathit{Eng}'$, then
  $\alpha_i \in \{ \tau , \tick  \}$, for $1 \leq i \leq n$, and
 there is $\mathit{Eng}''$ such that $\mathit{Eng}' \trans{\alpha} \mathit{Eng}''$, for some 
$\alpha_i \in \{ \tau , \tick  \}$. 
\end{proposition}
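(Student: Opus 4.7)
My plan is to prove both parts of the proposition simultaneously by exhibiting an inductive invariant $\Phi$ on the reachable states of $\mathit{Eng}$. The invariant is a case predicate over the finitely many syntactic positions the controller $\mathit{Ctrl}$ can be at, paired with constraints on $\actuatorfun{}(\mathit{cool})$ and $\statefun{}(\mathit{temp})$: for instance, at the head $X$ of the outer loop we require $\actuatorfun{}(\mathit{cool})=\off$ and $\statefun{}(\mathit{temp})\in[0,10.1]$; immediately after the read that triggers cooling we require $\statefun{}(\mathit{temp})\in(9.9,11.5]$; inside the countdown $\tick^{j}$ ($1\le j\le 5$) we require $\actuatorfun{}(\mathit{cool})=\on$ and give bounds tracking the cumulative cooling; at the second read we require $\statefun{}(\mathit{temp})\le 8.5$; and symmetric clauses for the remaining positions (write prefixes, the branch evaluation, the $\tick.X$ point after cooling is switched off). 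Because the process is finite-control modulo the two $\fix$'s, this is a bounded enumeration.

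The quantitative heart of the argument is the bound $\statefun{}(\mathit{temp})\le 11.5$ at the moment cooling is activated. It follows from two observations: (a)~the current reading $x>10$ together with the sensor error $\epsilon=0.1$ gives $\statefun{}(\mathit{temp})>9.9$; (b)~the previous time slot had $\actuatorfun{}(\mathit{cool})=\off$ and a reading $\le 10$ (else cooling would already be on), so the pre-tick temperature was $\le 10.1$, and one heating step with uncertainty $\delta=0.4$ adds at most $1+\delta=1.4$. From $\statefun{}(\mathit{temp})\le 11.5$ at cooling activation, five cooling ticks reduce $\mathit{temp}$ by at least $5\cdot(1-\delta)=3$, so before the cooling-phase read we have $\statefun{}(\mathit{temp})\le 8.5$ and any sensor value $x$ satisfies $x\le 8.5+\epsilon=8.6<10$. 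Hence the $\mathit{then}$-branch firing $\OUT{\mathit{warning}}{\mathrm{ID}}$ is unreachable. A matching lower bound (after cooling, $\mathit{temp}\ge 9.9-5\cdot 1.4=2.9$, then at most one heating tick before the outer loop restarts) keeps $\statefun{}(\mathit{temp})\ge 0$, so $\invariantfun{}(E)$ never becomes $\false$.

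I would then establish $\Phi$ by induction on the length of a trace $\mathit{Eng}\trans{\alpha_1}\dots\trans{\alpha_n}\mathit{Eng}'$. The base case is immediate since initially the controller is at $X$ with $\actuatorfun{}(\mathit{cool})=\off$ and $\statefun{}(\mathit{temp})=0$. For the inductive step I do a case split on the applied rule from \autoref{tab:lts_systems}: rules \rulename{SensRead}, \rulename{ActWrite}, \rulename{Tau} merely advance the controller between adjacent syntactic positions (possibly updating $\actuatorfun{}$) and are $\tau$-labelled; rule \rulename{Time} applies only at $X$, at $\tick.X$, and inside the $\tick^{j}$ countdown and moves $\mathit{temp}$ by an amount in $[-1.4,-0.6]\cup[0.6,1.4]$ according to $\actuatorfun{}(\mathit{cool})$, with each target clause of $\Phi$ accommodating precisely this shift. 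In no case is an output on $\mathit{warning}$ (nor any input) enabled, giving $\alpha_i\in\{\tau,\tick\}$.

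Non-deadlock then falls out of $\Phi$: at each position either an internal action is available (a $\pi$-prefix feeds \rulename{SensRead}/\rulename{ActWrite}/\rulename{Tau}) or the only continuation is $\tick$-guarded, in which case $\rulename{Time}$ applies since $\invariantfun{}(E)$ holds by the temperature bounds and $\operatorname{next}(E)$ is non-empty (take $\gamma=0$ in $\evolmap{}$). The step (b) of the quantitative argument implicitly uses time determinism (\autoref{prop:time}(a)) to talk about ``the'' previous tick. The main obstacle I anticipate is not conceptual but bookkeeping: enumerating every syntactic position and verifying that the numeric bounds are tight enough to round-trip through the boundary $\wact{\off}{\mathit{cool}}.\tick.X$, where an \off-write is followed by a single heating tick before re-entering $X$, so the bound carried into the next invocation of $X$ must still be $\le 10.1$ (it is, by $8.6+1.4=10.0\le 10.1$).
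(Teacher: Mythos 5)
Your overall strategy --- an inductive invariant over the controller's finitely many syntactic positions, coupling the value of $\actuatorfun{}(\mathit{cool})$ with interval bounds on $\statefun{}(\mathit{temp})$, established by induction on the length of the trace --- is essentially the paper's. The appendix proves exactly such a lemma: when the coolant is off, $\mathit{temp}\in[0,11+\epsilon+\delta]$; a value in $(10+\epsilon,11+\epsilon+\delta]$ forces activation in the next slot; and after $k\in 1..5$ cooling ticks $\mathit{temp}\in(10-\epsilon-k(1+\delta),\,11+\epsilon+\delta-k(1-\delta)]$. Your key figures ($11.5$ at activation, $8.5$ actual / $8.6$ sensed $<10$ at the post-cooling read) and your non-deadlock argument via $\invariantfun{}$ and non-emptiness of $\operatorname{next}$ coincide with the paper's.

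There is, however, one clause of your invariant that is not inductive, and the induction as stated fails there. You require $\statefun{}(\mathit{temp})\in[0,10.1]$ at the head of $X$, but this is falsified by reachable states: take $\mathit{temp}=10.05$ at $X$ (reachable by heating up from $0$), let the sensor return $9.95\le 10$ so the else-branch $\tick.X$ is taken, and let the tick add $1+\delta=1.4$; you return to $X$ with $\mathit{temp}=11.45\notin[0,10.1]$. The correct clause at $X$ (coolant off) is $[0,11.5]$, as in the paper's lemma, and to keep \emph{that} clause inductive you need the companion fact that whenever $\mathit{temp}>10.1$ at $X$ every admissible reading exceeds $10$, so cooling is necessarily switched on before the next tick and the temperature cannot climb beyond $11.5$. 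You invoke precisely this fact in step (b) of your quantitative argument, so the repair is local: weaken the clause at $X$ to $[0,11.5]$ and record the forced-activation guarantee as an explicit conjunct of $\Phi$. Two smaller slips: rule \rulename{Time} cannot fire at $X$ itself, since the read prefix always enables a system-level $\tau$ via \rulename{SensRead} and maximal progress then blocks the tick; and in the round-trip bound $8.6+1.4$ you add the heating step to the sensed bound rather than to the actual temperature bound $8.5$ --- harmless here, but worth tidying.
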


Actually, we can be quite precise on the temperature reached by the engine
 before and after the cooling activity: in each of the $5$ time slots of cooling, the temperature will drop of a value laying in the
interval $[1{-}\delta, 1 {+} \delta]$, where $\delta$ is the uncertainty
of the model. Formally, 

\begin{proposition}
\label{prop:X}
 For any execution trace of $\mathit{Eng}$, we have:
\begin{itemize}[noitemsep]
\item when $\mathit{Eng}$ \emph{turns on} the cooling, the value of
the state variable $\mathit{temp}$ ranges over $(10-\epsilon \, , \,  11+ \epsilon  + \delta]$;
\item when $\mathit{Eng}$ \emph{turns off} the cooling, the value of
the 
variable $\mathit{temp}$ ranges over $(10-\epsilon -5{*}(1{+}\delta) \, , \,  11+ \epsilon + \delta -5{*}(1{-}\delta)]$. 
\end{itemize}
\end{proposition}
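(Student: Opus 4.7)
The plan is to track the value of the state variable $\mathit{temp}$ at the ``reset points'' of the controller's execution, namely the configurations in which $\mathit{Ctrl}$ is about to execute the top-level sensor read $\rsens{x}{s_{\mathrm{t}}}$. The key intermediate claim is an invariant: whenever $\mathit{Eng}$ reaches such a configuration, the actuator satisfies $\actuatorfun{}(\mathit{cool})=\off$ and $0 \leq \mathit{temp} \leq 11+\epsilon+\delta$. I would prove this invariant by induction on the number of top-level readings along the trace. The base case is the initial configuration, where $\mathit{temp}=0$ and cooling is off. For the inductive step, the previous top-level reading must have returned a value $x'\leq 10$ (otherwise the trace would have entered the cooling phase and returned to the top level only after a $\wact{\off}{\mathit{cool}}.\tick.X$), so by the measurement map $\mathit{temp}$ was at most $10+\epsilon$ then; the controller subsequently performs a single $\tick.X$, which consumes exactly one $\tick$-transition with $\mathit{cool}=\off$. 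By the evolution map this advances $\mathit{temp}$ by an amount in $[1-\delta,1+\delta]$, yielding $\mathit{temp} \leq 11+\epsilon+\delta$ at the next reading.

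The first bullet then follows directly. The controller turns on the cooling exactly after a top-level reading with $x>10$: by the measurement map this forces $\mathit{temp} > x-\epsilon > 10-\epsilon$, while the invariant gives $\mathit{temp} \leq 11+\epsilon+\delta$. The subsequent $\wact{\on}{\mathit{cool}}$ action is instantaneous and updates only $\actuatorfun{}$, so the bounds remain valid at the moment cooling is turned on, yielding the interval $(10-\epsilon,\,11+\epsilon+\delta]$ for the value of $\mathit{temp}$.

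For the second bullet I would invoke Proposition~\ref{prop:sys} to rule out the warning branch, so that after $\wact{\on}{\mathit{cool}}$ the controller performs exactly the five $\tick$-transitions of $\tick^{5}$, finds a sensed value $\leq 10$, and proceeds to $\wact{\off}{\mathit{cool}}$. During each of these five ticks $\actuatorfun{}(\mathit{cool})=\on$, so the evolution map changes $\mathit{temp}$ by a value in $[-1-\delta,-1+\delta]$, for a cumulative change in $[-5(1+\delta),-5(1-\delta)]$. Adding this interval to the one from the first bullet yields precisely $(10-\epsilon-5(1+\delta),\,11+\epsilon+\delta-5(1-\delta)]$ at the moment $\wact{\off}{\mathit{cool}}$ is executed.

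The main obstacle is to make the inductive argument rigorous with respect to the LTS of Tables~\ref{tab:lts_processes}--\ref{tab:lts_systems}: one must verify that between two consecutive top-level readings exactly one $\tick$-transition elapses (which follows from the shape of the conditional branches together with the time determinism, maximal progress and patience properties of Theorem~\ref{prop:time}), and that no $\wact{\cdot}{\mathit{cool}}$ is enabled in that interval, so $\actuatorfun{}(\mathit{cool})$ really does remain $\off$ and $\mathit{heat}=+1$ throughout. A minor subtlety is the appeal to Proposition~\ref{prop:sys}: to avoid any suspicion of circularity, one can instead analyse the five-$\tick$ block directly, observing that the temperature at the post-cooling reading is at most $11+\epsilon+\delta-5(1-\delta)$, so the sensed value is at most $11+2\epsilon+\delta-5(1-\delta)<10$; this shows that the warning branch is never taken and yields both propositions simultaneously.
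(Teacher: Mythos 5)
Your proposal is correct and follows essentially the same route as the paper: its Lemma~\ref{lem:sys} is precisely your inductive invariant (stated per time step, with an explicit counter $k \in 1..5$ tracking how long the coolant has been active, rather than per top-level sensor reading with the five-$\tick$ cooling block treated atomically), and both bullets are then obtained by the same interval arithmetic on the evolution and measurement maps. Your closing remark about avoiding circularity with Proposition~\ref{prop:sys} also matches the paper's organization, which proves the invariant lemma first and derives both propositions independently from it.
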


\begin{figure}[t]
\centering
\includegraphics[width=5.8cm,keepaspectratio=true,angle=0]{./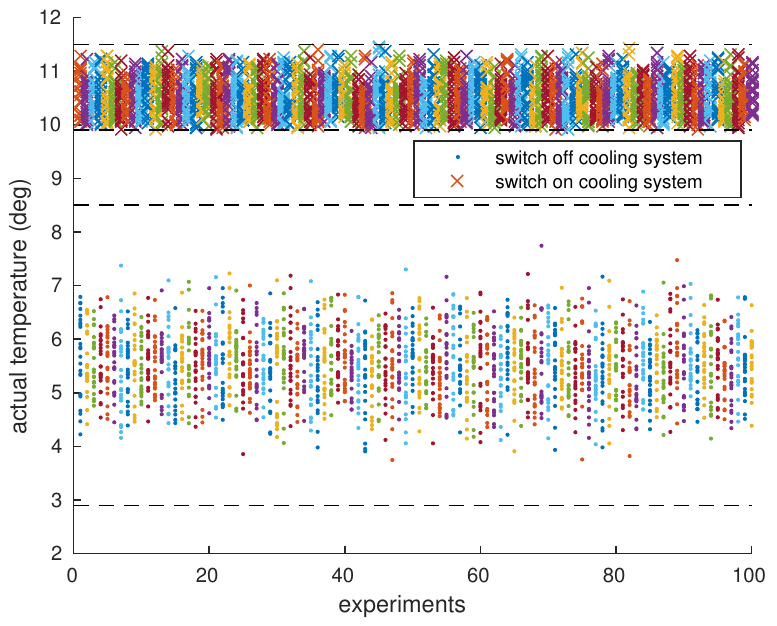}
\Q
\includegraphics[width=5.8cm,keepaspectratio=true,angle=0]{./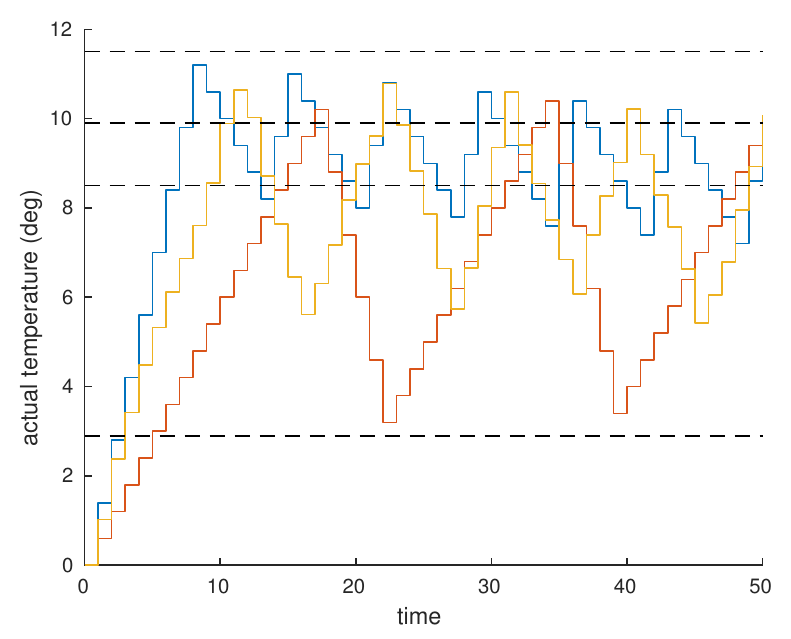}
\caption{Simulations in MATLAB of the engine $\mathit{Eng}$}
\label{f:HS traj}
\end{figure}
%

\enlargethispage{.8\baselineskip}
In \autoref{f:HS traj}, the left graphic collects a campaign of 100
simulations, lasting 250 time units each, showing that the value of the
state variable $\mathit{temp}$ when the cooling system is turned on
(resp., off) lays in the interval $(9.9, 11.5]$ (resp., $(2.9,8.5]$);
these bounds are represented by the dashed horizontal lines. Since
$\delta=0.4$, these results are in line with those of \autoref{prop:X}.
The right graphic shows three examples of possible evolutions in time of
the state variable $\mathit{temp}$.

Now, the reader may wonder whether it is possible to design a variant 
of our engine which meets the same specifications with better 
performances. For instance, an engine consuming less coolant. 
Let us consider a variant of the engine described before: 
\begin{center}
\begin{math}
\overline{\mathit{Eng}} \: = \: \confCPS {\overline{\env}} { \mathit{Ctrl} } \enspace . 
\end{math}
\end{center}
Here,   $\overline{\env}$
  is the same as $\env$ except for the  evolution map, as  we set 
$\mathit{heat}(\actuatorfun^i{},\mathit{cool})=-0.8$ if
$\actuatorfun^i{}(\mathit{cool}) = \on$. This means that in 
$\overline{\mathit{Eng}}$
 we reduce the power  of the cooling system by $20\%$. In 
\autoref{fig:consumption}, we report the results of our simulations 
over $10000$ runs lasting $10000$ time units each. From this graph,  
$\overline{\mathit{Eng}}$ saves in average more than 
$10\%$ of coolant with respect to $\mathit{Eng}$.
 So, the new question is: are these two engines
behavioural equivalent? Do they meet the same specifications? 

Our  bisimilarity  provides us with 
a precise answer to these  questions. 
\begin{proposition}
\label{prop:performances} The two variants of the engine are bisimilar: 
 $\mathit{Eng} \approx \overline{\mathit{Eng}}$.
\end{proposition}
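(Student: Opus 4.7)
The plan is to exhibit an explicit weak bisimulation $\RR$ witnessing $\mathit{Eng} \approx \overline{\mathit{Eng}}$, invoking \autoref{def:bisimulation}. The starting observation is that the two systems execute the same controller $\mathit{Ctrl}$ and differ only in the cooling rate ($-1$ vs.\ $-0.8$ per tick), while all other parameters---the uncertainty $\delta=0.4$, the sensor error $\epsilon=0.1$, the invariant $[0,30]$, and the heating rate---coincide. The first step is to repeat the analysis of \autoref{prop:X} for $\overline{\mathit{Eng}}$: with the reduced cooling power, the per-tick change under active cooling lies in $[-1.2,-0.4]$, so when $\overline{\mathit{Eng}}$ activates the cooling $\mathit{temp}$ still lies in $(10-\epsilon,\, 11+\epsilon+\delta]$, whereas after the five mandatory cooling ticks $\mathit{temp}$ lies in $(10-\epsilon-5(0.8+\delta),\, 11+\epsilon+\delta-5(0.8-\delta)] = (3.9,\, 9.5]$. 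The sensed value after cooling is then at most $9.5 + \epsilon = 9.6 < 10$, hence the guard $x>10$ in the cooling subloop is always false, the warning branch is never taken, and the invariant $0 \leq \mathit{temp} \leq 30$ is maintained. Thus $\overline{\mathit{Eng}}$ enjoys the same qualitative behaviour as $\mathit{Eng}$ (\autoref{prop:sys}): it only produces $\tau$- and $\tick$-actions and it never deadlocks.

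Second, I would define $\RR$ as the set of pairs $(\confCPS E Q,\, \confCPS {\bar E} Q)$ such that $Q$ is a reachable derivative of $\mathit{Ctrl}$, the values of the actuator $\mathit{cool}$ coincide in $E$ and $\bar E$, the invariant holds on both sides, and the state variable $\mathit{temp}$ sits in phase-specific intervals that guarantee matching behaviour at every forthcoming sensor read. Concretely: in the warming subloop both temperatures are kept in a synchronised window (maintained by suitably choosing the uncertainty at each heating tick); in the cooling subloop both lie inside the bounds computed above (and no sensor read takes place during the five cooling ticks); and on reaching the post-cooling guard both temperatures lie strictly below $10-\epsilon = 9.9$, so the ``turn off'' branch is forced in both. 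The pair $(\mathit{Eng}, \overline{\mathit{Eng}})$ belongs to $\RR$ by the common initial condition $\statefun(\mathit{temp})=0$ and $\actuatorfun(\mathit{cool})=\off$.

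Third, I would verify that $\RR$ is a bisimulation by case analysis on the rules of \autoref{tab:lts_systems}. The $\tau$-transitions arising from \rulename{ActWrite} and from process-level reductions are deterministic and match on both sides since they share the same process $Q$. Sensor-read $\tau$-transitions (rule \rulename{SensRead}) are matched by exploiting the non-determinism of $\mathit{read\_sensor}$: by construction of the phase intervals, either both sensor ranges lie entirely on the same side of the threshold $10$, or both straddle it, so every value picked in one system has a counterpart in the other yielding the same branch. For a $\tick$-move (rule \rulename{Time}), given any $E' \in \operatorname{next}(E)$ one must exhibit $\bar E' \in \operatorname{next}(\bar E)$ such that the resulting pair remains in $\RR$, and symmetrically. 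This last clause is the main obstacle: because the cooling powers differ, $E$ and $\bar E$ drift apart during the five cooling ticks, and one must use the $[-\delta,\delta]$ freedom of the evolution map to select next states whose temperatures remain in the phase-specific intervals---in particular, to keep $\overline{\mathit{Eng}}$'s temperature from rising above $10+\epsilon$ during the warming subloop despite its generally higher starting point after cooling. Verifying that this freedom always suffices is a straightforward but delicate interval-arithmetic argument, resting on the comfortable margin between the post-cooling ceiling of $9.5$ and the decision threshold of $9.9$.
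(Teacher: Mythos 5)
Your opening paragraph is, in substance, the paper's entire proof: it re-derives for $\overline{\mathit{Eng}}$ the invariant of \autoref{lem:sys} with the cooling decrement ranging over $[0.8-\delta,\,0.8+\delta]=[0.4,\,1.2]$, obtains the post-cooling interval $(3.9,\,9.5]$, and concludes that $\overline{\mathit{Eng}}$ never emits on $\mathit{warning}$ and never violates the invariant (this is exactly \autoref{lem:sys2} plus the argument of \autoref{prop:sys}). At that point you are essentially done: since the only observables are channel actions and $\tick$, two systems that never communicate on any channel, never deadlock, and are well-timed are bisimilar via the \emph{coarse} relation pairing every reachable state of one with every reachable state of the other --- a $\tau$-move is matched by the empty weak transition, and a $\tick$-move by $\Trans{}\trans{\tick}$, which is always available because neither system ever gets stuck. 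No further structure on the related states is needed.

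The genuine gap is in your second and third paragraphs: the lockstep relation you propose (same process derivative $Q$ on both sides, same branch taken at every sensor read) is \emph{not} closed under transitions, and the ``delicate interval-arithmetic argument'' you defer to cannot be completed --- the $[-\delta,\delta]$ freedom does not suffice. Concretely, let both systems enter cooling at temperature $11.0$ (reachable by warming at $+1.0$ per tick from the initial state $0$). During the five cooling ticks the adversary drives $\mathit{Eng}$ down by $1.4$ per tick, while $\overline{\mathit{Eng}}$ can shed at most $1.2$ per tick, so after cooling $\mathit{Eng}$ is forced to $4.0$ while $\overline{\mathit{Eng}}$ sits no lower than $5.0$: a gap of $1.0$ is unavoidable. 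In the subsequent warming phase both increments range over the same interval $[0.6,1.4]$, so the adversary warms $\overline{\mathit{Eng}}$ by $+1.4$ per tick: after four ticks it is at $10.6$, every admissible sensor reading lies in $[10.5,10.7]$, and $\overline{\mathit{Eng}}$ \emph{must} enter $\mathit{Cooling}$; meanwhile $\mathit{Eng}$ has reached at most $9.6$, every admissible reading is at most $9.7$, and it \emph{must} take the $\tick$ branch. The two controllers necessarily diverge, the resulting pair leaves your relation, and no choice of next states repairs this. The fix is not finer arithmetic but a weaker relation: drop the requirement that the two sides agree on processes and branches. The branch divergence produces no observable action, so $\mathit{Eng}$ matches $\overline{\mathit{Eng}}$'s $\tau$-move into cooling by standing still, and the heterogeneous pair is absorbed by the coarse relation above --- which is precisely why the paper can conclude directly from \autoref{lem:sys2} and \autoref{prop:sys}.
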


At this point, one may wonder whether it is possible to improve the performances 
of our engine even more. For instance,  by 
reducing the power of the cooling system 
by a further $10\%$, by setting 
$\mathit{heat}(\actuatorfun^i{},\mathit{cool})=-0.7$ if
$\actuatorfun^i{}(\mathit{cool}) = \on$. We can formally prove that
this is not the case. 
\begin{proposition}
\label{prop:stop} Let $\widehat{\mathit{Eng}}$
 be the  same as  $\mathit{Eng}$, except for the evolution map, 
in which 
$\mathit{heat}(\actuatorfun^i{},\mathit{cool})=-0.7$ if
$\actuatorfun^i{}(\mathit{cool}) = \on$. Then, $\mathit{Eng} \not\approx 
\widehat{\mathit{Eng}}$.
\end{proposition}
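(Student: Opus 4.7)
The plan is to exhibit a distinguishing observable: to build an execution of $\widehat{\mathit{Eng}}$ that fires the output label $\out{\mathit{warning}}{\mathrm{ID}}$, and to show that no execution of $\mathit{Eng}$ can fire it. By \autoref{def:bisimulation} this immediately yields $\mathit{Eng}\not\approx\widehat{\mathit{Eng}}$.

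First, I would construct the distinguishing trace for $\widehat{\mathit{Eng}}$. The key arithmetic is that in $\widehat{\env}$ each cooling tick changes $\mathit{temp}$ by a value in $[-1.1,-0.3]$, so over the five cooling ticks the minimum cumulative drop is only $5 \cdot 0.3 = 1.5$; and since the pre-cooling dynamics coincide with those of $\mathit{Eng}$, the same upper bound of $11.5$ from \autoref{prop:X}(1) is still reachable at the moment cooling is activated. Concretely, I would: (i) alternately apply \rulename{SensRead} (choosing a sensor value $\leq 10$, available as long as $\mathit{temp}\leq 10.1$) and \rulename{Time} (with $\gamma=+\delta$) to drive $\mathit{temp}$ up to $10.1$ without activating cooling; (ii) fire one more \rulename{Time} with $\gamma=+\delta$ to obtain $\mathit{temp}=11.5$; (iii) apply \rulename{SensRead} returning any value in $[11.4,11.6]$ (all of which are $>10$) so that the guard is true and the controller enters $\mathit{Cooling}$, then apply \rulename{ActWrite} to set $\mathit{cool}=\on$; (iv) fire five consecutive \rulename{Time} steps with $\gamma=+\delta$, each reducing $\mathit{temp}$ by exactly $0.3$, leaving $\mathit{temp}=10.0$; (v) apply \rulename{SensRead} with a value in $(10,10.1]$, making the guard true again, and apply rule \rulename{Out} to fire the label $\out{\mathit{warning}}{\mathrm{ID}}$. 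The invariant $0\leq\mathit{temp}\leq 30$ is trivially preserved throughout.

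Next, I would argue that no execution of $\mathit{Eng}$ ever performs $\out{\mathit{warning}}{\mathrm{ID}}$. By \autoref{prop:X}(2), whenever the controller of $\mathit{Eng}$ reaches the post-cooling sensor read, $\mathit{temp}\in(2.9,8.5]$, so every admissible measurement is at most $8.6<10$ and the guard $x>10$ is false: the conditional always takes the $\wact{\off}{\mathit{cool}}.\tick.X$ branch. A routine induction on the length of the execution then shows that the $\OUT{\mathit{warning}}{\mathrm{ID}}$ prefix is never enabled. By \autoref{def:bisimulation} the transition $\widehat{\mathit{Eng}}\Trans{\out{\mathit{warning}}{\mathrm{ID}}}\widehat{\mathit{Eng}}'$ from the previous step would have to be matched by some $\mathit{Eng}\Trans{\out{\mathit{warning}}{\mathrm{ID}}}\mathit{Eng}'$; but any such weak transition requires firing the action $\out{\mathit{warning}}{\mathrm{ID}}$, which we have just ruled out.

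The main obstacle is the bookkeeping in the first step: at every application of \rulename{Time} one must verify that $\gamma=+\delta$ yields an element of $\evolmap{}(\statefun{},\actuatorfun{},\uncertaintyfun{})$, and at every application of \rulename{SensRead} that the chosen value lies in $\measmap{}(\statefun{},\errorfun{})$. Both are immediate from the closed-form definitions of the two maps in $\widehat{\env}$, but spelling them out over the whole heating phase effectively amounts to proving, along a specific trace, the $\widehat{\mathit{Eng}}$-analogue of \autoref{prop:X}(1); this is the only non-purely-notational piece of the argument.
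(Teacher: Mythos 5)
Your proposal follows essentially the same route as the paper: drive $\mathit{temp}$ just above $10$ while the sensor (error $\epsilon=0.1$) can still report a value $\leq 10$, then let one more heating tick push it to the maximum $11.5$, exploit the weakened cooling rate $-0.7+\gamma\geq -0.3$ to keep the temperature at $10$ after the five cooling ticks, and pick a measurement in $(10,10.1]$ to trigger the $\mathit{warning}$ output, which $\mathit{Eng}$ can never produce (\autoref{prop:sys}, or your re-derivation via \autoref{prop:X}). There is, however, one concrete slip in step~(i): if every heating tick uses $\gamma=+\delta$, the increment is always $1.4$, so starting from $\mathit{temp}=0$ you pass from $9.8$ directly to $11.2$ and never land in the critical window $(10,10.1]$. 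At $9.8$ cooling cannot be activated, and at $11.2$ it must be; five cooling ticks from $11.2$ leave $\mathit{temp}\leq 11.2-1.5=9.7$, whose measurements are all $\leq 9.8<10$, so that trace emits no warning and the construction fails as literally written. The fix is exactly what the paper does: choose smaller admissible perturbations during the heating phase (e.g.\ $\gamma=0.01$ for ten ticks, giving increments of $1.01$ and landing precisely on $10.1$), which is legitimate since $1.01\in[1-\delta,1+\delta]$. With that repair the remaining arithmetic ($11.5-5\cdot 0.3=10$, measurements in $[9.9,10.1]$) and the concluding bisimulation argument are correct and coincide with the paper's.
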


\begin{figure}[t]
\centering
\includegraphics[width=7.5cm,keepaspectratio=true,angle=0]{./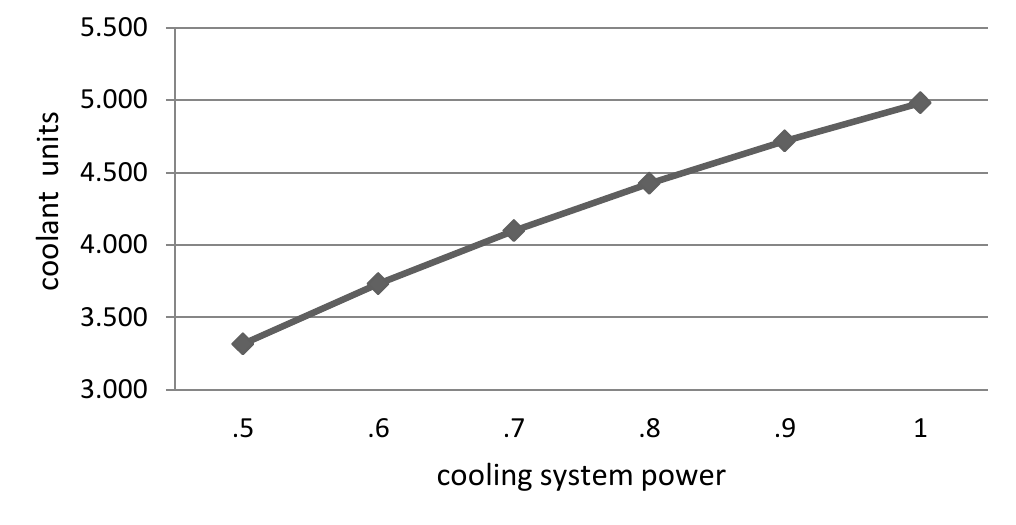}
\caption{Simulations in MATLAB of coolant consumption}
\label{fig:consumption}
\end{figure}

Finally, we show  how we can use the 
compositionality of our 
behavioural semantics  (\autoref{thm:congruence}) to deal with bigger \CPS{s}. 
Suppose that $\mathit{Eng}$ denotes the  modelisation of an airplane
engine. In this case, we could define in \cname{} a very simple 
\emph{airplane control system} that checks whether
the left engine ($\mathit{Eng}_{\mathrm{L}}$) and the right engine
($\mathit{Eng}_{\mathrm{R}}$) are signalling warnings. 
The whole \CPS{} is defined as follows:
\begin{displaymath}
\mathit{Airplane}  \: = \:  \big( ( \mathit{Eng}_{\mathrm{L}} \uplus
\mathit{Eng}_{\mathrm{R}}  )  \parallel \mathit{Check} \big) {\setminus}\{warning \}  
\end{displaymath}%
where $\mathit{Eng}_{\mathrm{L}} = \mathit{Eng}
\subst{\mathrm L}{\mathrm{ID}}
\subst{\mathit{temp{\_}l}}{\mathit{temp}}
\subst{\mathit{cool{\_}l}}{\mathit{cool}} 
\subst{s_{\mathrm{t}{\_}l}}{s_{\mathrm{t}}}$, and, similarly, 
$\mathit{Eng}_{\mathrm{R}} = \mathit{Eng}
\subst{\mathrm R}{\mathrm{ID}}
\subst{\mathit{temp{\_}r}}{\mathit{temp}}
\subst{\mathit{cool{\_}r}}{\mathit{cool}} 
\subst{s_{\mathrm{t}{\_}r}}{s_{\mathrm{t}}}$, 
and process $\mathit{Check}$ is defined as:  
\begin{center}
{\small 
\begin{math}
\begin{array}{rcl}
\mathit{Check} & = &  \fix{X}
\timeout{\LIN{\mathit{warning}}{x }.
\ifelse {x = {\mathrm{L}}} {\mathit{Check}^{\mathrm L}_1}
{\mathit{Check}^{\mathrm R}_1}}{X}\\[3pt]
\mathit{Check}^{\mathit{id}}_i & = & 
\timeout{\LIN{\mathit{warning}}{y }.
\ifelse {y \neq {\mathit{id}}} {\OUTCCS{\mathit{alarm}}.\tick.X}
{\tick. \mathit{Check}^{\mathit{id}}_{i+1}}}
{\mathit{Check}^{\mathit{id}}_{i+1}}\\[3pt]
\mathit{Check}^{\mathit{id}}_5  & = & \lfloor {\LIN{\mathit{warning}}{z }.
\ifelse {z \neq {\mathit{id}}} {\OUTCCS{\mathit{alarm}}.\tick.X}
{ \OUT{\mathit{failure}}{\mathit{id}}.\tick. X}} \rfloor \\
&&
 {\OUT{\mathit{failure}}{\mathit{id}}.X}
\end{array}
\end{math}
}
\end{center}
for $1 \leq i \leq 5$. 
Intuitively, if one of the two engines is in a warning state then the 
process $\mathit{Check}^{\mathit{id}}_i$, for ${\mathit{id}} \in \{ \mathrm{L}, 
\mathrm{R}\}$,   checks whether also the second engine moves into a warning state,  in the following 
 $5$ time intervals (i.e.\ during the cooling cycle). If both engines gets in a 
 warning state then  an $\mathit{alarm}$ is 
sent, otherwise, if only one  engine is facing a  warning then
 the airplane control system yields a \emph{failure} signalling which engine 
is not working properly.

So, since we know that $\mathit{Eng} \approx \overline{\mathit{Eng}}$, the final question becomes the following: can we safely equip our airplane with the  more performant engines, $\overline{\mathit{Eng}_{\mathrm{L}} }$ and $\overline{\mathit{Eng}_{\mathrm{R}} }$, in which 
$\mathit{heat}(\actuatorfun^i{},\mathit{cool})=-0.8$ if
$\actuatorfun^i{}(\mathit{cool}) = \on$,  without affecting the
whole observable 
behaviour of the airplane?
The answer is  ``yes'', and this result can be formally   proved 
by applying \autoref{prop:performances} and \autoref{thm:congruence}. 
\begin{proposition}
\label{prop:air}
Let \begin{math}
\overline{\mathit{Airplane}}  =  \big( (\overline{\mathit{Eng}_{\mathrm{L}} } 
\uplus 
 \overline{\mathit{Eng}_{\mathrm R}}) \parallel \mathit{Check} \big)
{\setminus} \{warning\} 
\end{math}. Then,  
$\mathit{Airplane} \approx \overline{\mathit{Airplane}}$.
\end{proposition}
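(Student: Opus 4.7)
The plan is to derive the result by a short chain of applications of the congruence theorem (Theorem~\ref{thm:congruence}), reducing it to the already established equivalence $\mathit{Eng} \approx \overline{\mathit{Eng}}$ of Proposition~\ref{prop:performances}. At no point do I intend to reopen the bisimulation game: the whole argument is compositional.

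First I would observe that $\mathit{Eng}_{\mathrm{L}}$ and $\overline{\mathit{Eng}_{\mathrm{L}}}$ are obtained from $\mathit{Eng}$ and $\overline{\mathit{Eng}}$ by the \emph{same} renaming of the state variable $\mathit{temp}$, of the sensor $s_{\mathrm{t}}$ and of the actuator $\mathit{cool}$, together with the uniform substitution of the constant $\mathrm{L}$ for $\mathrm{ID}$ inside the controller. Since Section~\ref{sec:calculus} identifies CPSs up to renaming of state variables, sensor names and actuator names, and since any bisimulation witnessing $\mathit{Eng} \approx \overline{\mathit{Eng}}$ is transported along a uniform substitution on the cyber component (the LTS of Tables~\ref{tab:lts_processes}--\ref{tab:lts_systems} being defined by structural induction), we obtain $\mathit{Eng}_{\mathrm{L}} \approx \overline{\mathit{Eng}_{\mathrm{L}}}$; symmetrically $\mathit{Eng}_{\mathrm{R}} \approx \overline{\mathit{Eng}_{\mathrm{R}}}$. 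By construction the left and right physical plants share no state variable, sensor or actuator (everything is tagged by $\_l$ versus $\_r$), so the pairs of engines are pairwise non-interfering and the disjoint unions $\mathit{Eng}_{\mathrm{L}} \uplus \mathit{Eng}_{\mathrm{R}}$ and $\overline{\mathit{Eng}_{\mathrm{L}}} \uplus \overline{\mathit{Eng}_{\mathrm{R}}}$ are well-defined CPSs.

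Next I would apply item~1 of Theorem~\ref{thm:congruence} twice, using the symmetry of $\uplus$ and the transitivity of $\approx$, to obtain
\[
\mathit{Eng}_{\mathrm{L}} \uplus \mathit{Eng}_{\mathrm{R}} \;\approx\; \overline{\mathit{Eng}_{\mathrm{L}}} \uplus \mathit{Eng}_{\mathrm{R}} \;\approx\; \overline{\mathit{Eng}_{\mathrm{L}}} \uplus \overline{\mathit{Eng}_{\mathrm{R}}} \, .
\]
The process $\mathit{Check}$ is manifestly non-interfering in the sense of the corresponding definition: it only mentions the channels $\mathit{warning}$, $\mathit{alarm}$ and $\mathit{failure}$ and never reads a sensor or writes an actuator. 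Hence item~2 of Theorem~\ref{thm:congruence} lifts the previous equivalence to
\[
(\mathit{Eng}_{\mathrm{L}} \uplus \mathit{Eng}_{\mathrm{R}}) \parallel \mathit{Check} \;\approx\; (\overline{\mathit{Eng}_{\mathrm{L}}} \uplus \overline{\mathit{Eng}_{\mathrm{R}}}) \parallel \mathit{Check} \, .
\]
Finally, item~3 of Theorem~\ref{thm:congruence} applied to the channel $\mathit{warning}$ (reading the one-element set $\{\mathit{warning}\}$ through the shorthand for iterated restriction) yields $\mathit{Airplane} \approx \overline{\mathit{Airplane}}$.

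The only real obstacles are bookkeeping. I would need to discharge carefully that (i) the renamings turning $\mathit{Eng}$ into $\mathit{Eng}_{\mathrm{L}}$ and $\mathit{Eng}_{\mathrm{R}}$ fall under the identification convention on state variables, sensor and actuator names, and that the substitution $\mathrm{L}/\mathrm{ID}$ on the cyber component preserves $\approx$; and (ii) the non-interference side conditions for the two uses of Theorem~\ref{thm:congruence} are satisfied. Once these are checked, the proposition follows from Proposition~\ref{prop:performances} and the three compositionality items by transitivity of $\approx$, with no further semantic work.
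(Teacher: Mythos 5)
Your proposal is correct and follows essentially the same route as the paper's own proof: \autoref{prop:performances}, renaming to obtain $\mathit{Eng}_{\mathrm{L}} \approx \overline{\mathit{Eng}_{\mathrm{L}}}$ and $\mathit{Eng}_{\mathrm{R}} \approx \overline{\mathit{Eng}_{\mathrm{R}}}$, and then the three items of \autoref{thm:congruence} applied in sequence with transitivity of $\approx$. The additional bookkeeping you flag (disjointness of the two plants, non-interference of $\mathit{Check}$, the treatment of the $\mathrm{ID}$ substitution) merely makes explicit the side conditions the paper's proof leaves implicit.
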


\section{Related and Future Work}
\label{sec:conclusions}

A number of approaches have been proposed for modelling \CPS{s} using formal methods. 
For instance, \emph{hybrid automata}~\cite{ACHH1993} combine
finite state transition systems with discrete variables (whose values capture the state of the modelled discrete or cyber components) and continuous variables (whose values capture the state of the modelled continuous or physical components). 

\emph{Hybrid process algebras}~\cite{CuRe05}
are a powerful tool for reasoning about physical systems and provide techniques for analysing and verifying protocols for hybrid automata.
\cname{} shares some similarities with the 
$\phi$-calculus~\cite{RouSong03}, a hybrid extension of the $\pi$-calculus~\cite{Mil91}. 
In the $\phi$-calculus, a hybrid system is represented as a pair $(E,P)$, where $E$ is
the environment and $P$ is the process interacting with the environment.
Unlike \cname{}, in $\phi$-calculus, given a system $(E,P)$ the process
$P$ can dynamically change  both 
the evolution law and the invariant of the system. However, 
the $\phi$-calculus does not have a representation of
physical devices and measurement law. 
Concerning behavioural semantics, the $\phi$-calculus is equipped with a
weak bisimilarity between systems that is  not compositional.

In the HYPE process algebra~\cite{HYPE}, the continuous part
of the system is represented by  appropriate variables whose changes are 
determined by active influences (i.e., commands on actuators).
The authors defines a strong bisimulation that extends the \emph{ic-bisimulation} of~\cite{BergMid05}. Unlike ic-bisimulation, the
bisimulation in HYPE is preserved by a notion of parallel 
composition that is slightly more
permessive than ours. However, bisimilar systems in HYPE must always have the 
same influence. Thus, in HYPE we cannot compare 
\CPS{s} sending different commands on actuators at the same time, 
as we do in Proposition~\ref{prop:performances}. 

Vigo et al.~\cite{VNN13} proposed a  calculus for wireless-based 
cyber-physical systems 
endowed with a theory to study cryptographic primitives, together with explicit notions of communication failure and unwanted communication. 
 The calculus does not provide any notion of  behavioural equivalence. 
It also lacks a clear distinction between physical and logical 
components. 

Lanese et al.~\cite{LBdF13} proposed an untimed 
calculus of mobile IoT devices interacting 
with the physical environment by means of sensors and  actuators. 
The calculus does not allow any representation of the physical environment, 
and the bisimilarity is not preserved by parallel composition (compositionality is recovered by significantly strengthening the discriminating power). 

Lanotte and Merro~\cite{LM16} extended and generalised the work of~\cite{LBdF13}
in a timed setting by providing a 
 bisimulation-based 
semantic theory that is suitable for compositional reasoning. 
As in~\cite{LBdF13}, the  physical environment is not represented.

Bodei et al.~\cite{BDFG16} proposed an untimed 
process calculus supporting  a control flow 
analysis to track how data spread from sensors to the logics of the network, 
and how physical data are manipulated. Sensors and actuators
are modelled as value-passing CCS channels. 
The dynamics of the calculus is given in terms of a reduction 
relation and no behavioural equivalence is defined. 

As regards future works, 
we believe that our paper can  lay and streamline 
\emph{theoretical foundations} for the development of formal and 
automated tools to verify \CPS{s} before their practical implementation.
 To that end, we will consider applying, possibly after proper enhancements,
 existing tools and frameworks for automated verification, such as 
Maude~\cite{Maude}, Ariadne~\cite{Ariadne}, and SMC UPPAAL~\cite{SMC-Uppaal}, resorting to the development of an dedicated tool if existing ones prove not up to the task.
 Finally, in~\cite{LMMV16}, we developed
an extended version of \cname{} to provide a formal study of a 
variety of  \emph{cyber-physical attacks}  targeting  physical devices.
Also in this case, the final goal is to develop formal and automated tools
to analyse security properties of \CPS{s}.

\vspace*{-1.8mm}
\paragraph*{Acknowledgements} We thank Riccardo Muradore for providing us with simulations in MATLAB.

\bibliography{main}
\bibliographystyle{splncs03}

\appendix

\section{Proofs}

 We recall that the cyber-components
our \CPS{s} are basically TPL-processes~\cite{HR95} extended with  
constructs to read sensors and write actuators. TPL already enjoys 
time determinism, patience and maximal progress. The well-timedness property
is present in many process calculi with a discrete notion of time (e.g.~\cite{MBS11}) similar to ours. Thus, it is 
straightforward to rewrite the proofs 
of those results for our slight variant of TPL. 
\begin{proposition}[Processes time properties~\cite{HR95,MBS11}] 
\label{prop:time2}	
\begin{itemize}
\item 
If $P \trans \tick Q$ and 
$P\trans \tick  {R}$, then   
$Q \equiv  R$. 
\item 
If $P\trans \tau P'$  then there is no $P''$ such that  $P\trans \tick P''$. 

\item 
If $P \trans \tick P'$  for no  $P'$ 
 then  
there is $Q$ such that $P \trans \tau Q$. 

\item
For any $P$  there is a $ k\in\mathbb{N}$ such that  if
$P  \trans{\lambda_1}\dots \trans{\lambda_n} P$, with $\lambda_i \neq \tick$,   then $n\leq k$. 
\end{itemize}
\end{proposition}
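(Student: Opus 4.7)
The plan is to prove each of the four items by a standard induction, following the development in \cite{HR95,MBS11} for TPL. The key observation is that the two new prefix forms, the sensor read $\rsens x s$ and the actuator write $\wact v a$, fit exactly into the common timeout schema $\timeout{\pi.P}{Q}$, and the rules governing the passage of time (TimeNil, Delay, Timeout, TimePar, ChnRes, Rec) are identical to their TPL counterparts. Consequently, the TPL proofs transfer essentially verbatim, with each new prefix treated uniformly alongside input and output.

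For \emph{time determinism}, I would induct on the derivation of $P \trans{\tick} Q$. The axiomatic rules TimeNil, Delay, and Timeout each have a unique conclusion shape, so $Q$ is fully determined by $P$. The inductive case TimePar combines the unique tick-continuations of the two components given by the inductive hypothesis, and ChnRes and Rec propagate uniqueness in the obvious way, using $\equiv$ to absorb minor syntactic rearrangements. For \emph{maximal progress}, the only way a composite process could simultaneously perform a $\tau$ and a $\tick$ is via TimePar, whose negative premise $P_1 \parallel P_2 \ntrans{\tau}$ is violated by hypothesis. The axiomatic tick rules apply only to $\nil$, $\tick.P$, and $\timeout{\pi.P}{Q}$, none of which admit a direct $\tau$-transition; the propagating cases ChnRes and Rec reduce to the inductive hypothesis.

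For \emph{patience}, the argument proceeds by structural induction on the (closed) process $P$. The cases $\nil$, $\tick.P$, $\timeout{\pi.P}{Q}$ admit an immediate tick via TimeNil, Delay, Timeout respectively. For $P_1 \parallel P_2$, the inductive hypothesis yields that each $P_i$ either ticks or performs a $\tau$; if either performs a $\tau$, rule Par gives $P_1 \parallel P_2 \trans{\tau}$; if a Com is possible between the two, it also yields a $\tau$; otherwise both components tick and TimePar fires. Restriction and recursion reduce to the inductive hypothesis via ChnRes and Rec. For \emph{well-timedness}, I would introduce an instantaneous measure $|P| \in \mathbb{N}$ defined by $|\nil| = |\tick.P| = |X| = 0$, $|\timeout{\pi.P}{Q}| = 1 + |P|$, $|P_1 \parallel P_2| = |P_1| + |P_2|$, $|P \setminus c| = |P|$, and $|\fix{X}P| = |P|$, and then show that every non-tick transition strictly decreases $|\cdot|$. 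This rests on the substitution lemma $|P\{\fix{X}P/X\}| = |P|$ for time-guarded $X$, which holds because every occurrence of $X$ sits either under a $\tick.\_$ or in the timeout position of some $\timeout{\pi.\_}{X}$, both contributing zero to the measure. Taking $k = |P|$ then bounds the length of any non-tick trace from $P$.

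The main obstacle is making the well-timedness measure watertight in the presence of recursion: one must verify the substitution invariance $|P\{\fix{X}P/X\}| = |P|$ induced by Rec, which forces a precise formalisation of time-guardedness matching the reduction behaviour of the unfolding rule. Once that invariant is established, the strict decrease of $|\cdot|$ under Com, Par, Outp, Inpp, Write, Read, and ChnRes follows by a routine case analysis on the rule producing the non-tick transition.
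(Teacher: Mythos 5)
Your proposal is correct and follows exactly the route the paper itself takes: the paper offers no detailed proof of \autoref{prop:time2}, merely observing that the new prefixes $\rsens x s$ and $\wact v a$ fit the same timeout schema as TPL's prefixes so that the arguments of~\cite{HR95,MBS11} carry over, and your inductions on derivations (determinism, maximal progress), structural induction (patience), and decreasing instantaneous measure with the substitution invariance $|P\{\fix{X}P/X\}|=|P|$ for time-guarded $X$ (well-timedness) are precisely the standard arguments being invoked there. The only cosmetic omission is a measure clause for $\ifelse b P Q$ (e.g.\ $\max(|P|,|Q|)$), which does not affect the argument.
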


The challenge in 
the proof of  \autoref{prop:time} is to lift the results
of \autoref{prop:time2} to the \CPS{s} of \cname.

In its standard formulation, \emph{time determinism} says that a system reaches at most one new state by executing a 
 $\tick$-action.  However,  by an application of Rule~\rulename{Time}, our 
\CPS{s} may nondeterministically move into a new physical environment, according to the evolution law.

\begin{proposition}[Time determinism for \CPS{s}]
\label{prop:timed:aux}
If $M \trans \tick \confCPS {\hat{E}} {Q} $ and 
$M\trans \tick \confCPS {\tilde{E}} {R}$, then $\{ \hat{E}, \tilde{E} \} 
\subseteq \operatorname{next}{(E)}$ and 
$Q \equiv  R$.
\end{proposition}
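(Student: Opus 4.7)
The plan is to reduce the statement to the process-level time determinism recorded in \autoref{prop:time2} via inversion on the only CPS rule that can emit a $\tick$ action.

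First, I would observe that inspecting \autoref{tab:lts_systems} shows that the sole rule whose conclusion carries a $\tick$ label is rule \rulename{Time}. Hence both hypothesised derivations
$M \trans{\tick} \confCPS{\hat{E}}{Q}$ and $M \trans{\tick} \confCPS{\tilde{E}}{R}$ must terminate with an application of \rulename{Time}. Writing $M = \confCPS{E}{P}$, inversion on the two derivations yields:
$P \trans{\tick} Q$ with $\hat{E} \in \operatorname{next}(E)$, $\invariantfun{}(E)$ and $\confCPS{E}{P} \ntrans{\tau}$; and similarly $P \trans{\tick} R$ with $\tilde{E} \in \operatorname{next}(E)$. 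The inclusion $\{\hat{E},\tilde{E}\} \subseteq \operatorname{next}(E)$ is therefore immediate from the side conditions of the two applications of \rulename{Time}.

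It remains to show $Q \equiv R$. For this I would appeal directly to the first bullet of \autoref{prop:time2}: from $P \trans{\tick} Q$ and $P \trans{\tick} R$ at the process level, we obtain $Q \equiv R$ up to structural congruence. The only reason this citation is not completely off-the-shelf is that our process calculus extends TPL with the two prefixes $\rsens{x}{s}$ and $\wact{v}{a}$. However, a glance at \autoref{tab:lts_processes} shows that these new prefixes only appear in the rules \rulename{Read} and \rulename{Write}, which produce the non-$\tick$ labels $\rcva{s}{v}$ and $\snda{a}{v}$, and in the timeout-discarding rule \rulename{Timeout}, whose behaviour on $\tick$ is the same as for the pure TPL prefixes $\OUT{c}{v}$ and $\LIN{c}{x}$. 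Consequently the inductive argument establishing time determinism for TPL processes extends verbatim to our calculus.

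The main obstacle is essentially bookkeeping: one must confirm that enriching the process syntax with sensor/actuator prefixes introduces no new $\tick$-producing rule and does not disturb the existing ones, so that \autoref{prop:time2} genuinely applies. Once that is checked, the CPS-level statement follows by a single inversion plus one invocation of the process-level lemma, with no further calculation required.
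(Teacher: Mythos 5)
Your proof is correct and follows essentially the same route as the paper's: inversion on rule \rulename{Time} (the only system-level rule producing a $\tick$) to extract the side conditions $\hat{E},\tilde{E}\in\operatorname{next}(E)$ and the process-level transitions $P\trans{\tick}Q$, $P\trans{\tick}R$, then an appeal to the process-level time determinism of \autoref{prop:time2}. Your extra remark justifying why the TPL result lifts to the extended syntax corresponds to the paper's preliminary discussion before \autoref{prop:time2} and is a welcome, if minor, addition.
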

\begin{proof}
Let $M=\confCPS { E } {P}$.
Since the only possible   rule to derive $\trans \tick$ is rule \rulename{Time}, then we have that there is $Q$, $R$, $\hat{E}$, $\tilde{E}$ such that 
 \[  \Txiombis
  { P  \trans{\tick} {Q} \q\;
 M   \ntrans{\tau} \q\;
\invariantfun{}(E ) \q\; 
 {\hat{E}} \in \operatorname{next}(E ) }
{M   \trans{\tick} \confCPS {\hat{E}} {Q}}
\]
and
 \[  \Txiombis
  { P  \trans{\tick} {R} \q\;
 M   \ntrans{\tau} \q\;
\invariantfun{}(E ) \q\; 
 {\tilde{E}} \in \operatorname{next}(E ) }
{M   \trans{\tick} \confCPS {\tilde{E}} {R}} \enspace . 
\]
The result follows by  \autoref{prop:time2}.
\end{proof}

According to~\cite{HR95}, the  \emph{maximal progress}
property says that processes communicate as soon as a possibility of communication arises. 
In our calculus,  we generalise this property saying that 
$\tau$-actions  cannot be delayed, independently on how they are
generated. 
\begin{proposition}[Maximal progress for \CPS{s}]
\label{prop:maxprog:aux}
If $M\trans \tau M'$  then there is no $M''$ such that  $M\trans \tick M''$. 
\end{proposition}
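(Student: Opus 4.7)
The statement is essentially built into the side condition of the rule that produces $\tick$-transitions at the CPS level, so the plan is short: inspect the LTS in \autoref{tab:lts_systems} and observe that the only rule whose conclusion is a $\tick$-labelled transition for a CPS is \rulename{Time}. Every other rule concludes with $\tau$, $\out{c}{v}$, or $\inp{c}{v}$.

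My plan is therefore to suppose, towards a contradiction, that $M \trans{\tau} M'$ and $M \trans{\tick} M''$ both hold, and then examine the derivation of $M \trans{\tick} M''$. This derivation must end with an application of rule \rulename{Time}, whose premises explicitly include the negative side condition $\confCPS{E}{P} \ntrans{\tau}$ (writing $M = \confCPS{E}{P}$). But the assumption $M \trans{\tau} M'$ directly contradicts this side condition, so no such derivation exists, and therefore no such $M''$ exists.

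There is no real obstacle here: unlike the standard TPL setting where maximal progress must be argued by structural induction on processes (and appeals to \autoref{prop:time2}), at the CPS level the property is enforced by fiat through the negative premise in \rulename{Time}. The only thing to double-check is that no other rule in \autoref{tab:lts_systems} can produce a $\tick$-action — which is immediate by inspection, since \rulename{Out}, \rulename{Inp}, \rulename{SensRead}, \rulename{ActWrite}, and \rulename{Tau} all emit non-$\tick$ labels regardless of the process-level label they consume.
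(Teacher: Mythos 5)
Your proposal is correct and follows exactly the same route as the paper's own proof: argue by contradiction, note that rule \rulename{Time} is the only rule in \autoref{tab:lts_systems} that can derive a $\tick$-transition for a CPS, and observe that its negative premise $M \ntrans{\tau}$ is directly contradicted by the hypothesis $M \trans{\tau} M'$. No differences worth noting.
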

\begin{proof}
The proof is by contradiction.
Let us suppose   $M   \trans{\tick}M''$, for some $M''$.
This   is only possible by an application of   rule \rulename{Time}:
 \[  \Txiombis
  { P  \trans{\tick} {P'} \q\;
 M   \ntrans{\tau} \q\;
\invariantfun{}(E ) \q\; 
 E' \in \operatorname{next}(E ) }
{M   \trans{\tick} M''}
\]
  with   $M''=\confCPS {E'} {P'}$.
However,  the  premises requires  $ M   \ntrans{\tau}$ which contradicts the fact that 
$M\trans \tau M'$.
\end{proof}

Patience in \cname{} is more involved with respect to the same property in
TPL. It basically  says that if a \CPS{} cannot evolve in time, then
either (i) the physical plant does not contemplate an evolution, or (ii)
the invariant is violated, or (iii) the \CPS{} can perform an internal action.  
\begin{proposition}[Patience for \CPS{}]
\label{prop:patience:aux}
If $M \trans \tick M'$  for no  $M'$ 
 then either  $ \operatorname{next}(E )=\emptyset$ or
 $\invariantfun{(M)}=\false$ or
there is $N$ such that $M \trans \tau N$. 
\end{proposition}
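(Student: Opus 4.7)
The plan is to argue by case analysis on which premise of the unique rule that can derive $\tick$-transitions for systems, namely \rulename{Time}, fails. Recall that \rulename{Time} requires four side conditions: (i) $P \trans{\tick} P'$ for some $P'$, (ii) $M \ntrans{\tau}$, (iii) $\invariantfun{}(E)$ holds, and (iv) $\operatorname{next}(E) \neq \emptyset$ (so that some $E'$ can be chosen). Since by hypothesis $M \trans{\tick} M'$ holds for no $M'$, at least one of these must fail.

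First I would dispose of the easy branches. If (iii) fails then $\invariantfun{}(M) = \false$, which is one of the disjuncts in the conclusion. If (iv) fails then $\operatorname{next}(E) = \emptyset$, again a disjunct. If (ii) fails, then by definition $M \trans{\tau} N$ for some $N$, which is the third disjunct. So the only situation requiring real work is when (iii) and (iv) both hold and $M \ntrans{\tau}$, yet $P \ntrans{\tick}$ at the process level.

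In that residual case I would invoke \autoref{prop:time2}, the process-level patience for TPL-like processes already proved in the paper: since $P \ntrans{\tick}$, there exists $Q$ with $P \trans{\tau} Q$. Now I lift this to the system level using rule \rulename{Tau}, whose only side condition is $\invariantfun{}(E)$, which holds by assumption in this subcase. This yields $M \trans{\tau} \confCPS{E}{Q}$, contradicting $M \ntrans{\tau}$. Hence this subcase is vacuous, completing the case analysis.

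The only mildly delicate point is making sure that the process-level patience can always be applied safely: it requires no assumption on the environment, and the subsequent lift via \rulename{Tau} requires only the invariant, which we already have in the relevant subcase; thus the argument goes through without any interaction between the physical premises of \rulename{Time} and the logical derivation of $\tau$-actions. I do not expect any genuine obstacle: the statement is essentially a structural inspection of the rules, and the only ``new'' ingredient with respect to the TPL analogue is the extra pair of premises involving $\invariantfun{}(E)$ and $\operatorname{next}(E)$, both of which appear directly among the disjuncts of the conclusion.
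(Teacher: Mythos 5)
Your proposal is correct and follows essentially the same route as the paper's proof: both reduce the statement to an inspection of the premises of rule \rulename{Time}, observe that when the invariant holds, $\operatorname{next}(E)\neq\emptyset$ and $M\ntrans{\tau}$ the only remaining possibility is $P\ntrans{\tick}$, and then invoke process-level patience (\autoref{prop:time2}) together with rule \rulename{Tau} to produce a $\tau$-transition of $M$, contradicting $M\ntrans{\tau}$. The paper phrases this as a single proof by contradiction whereas you organise it as a case analysis with one vacuous branch, but the content is identical.
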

\begin{proof}
The proof is by contradiction.
Let us suppose that $M \trans \tick M'$  for no  $M'$, and 
 $M   \ntrans{\tau}$  and
$\invariantfun{}(E ) =\true$ and $ E' \in \operatorname{next}(E ) $, for some $E'$.
Since the only possible   rule to derive $\trans \tick$ is rule \rulename{Time},  then  $M \trans \tick M'$  for no  $M'$, implies
that the following derivation is not admissible for any $P'$ and $E'$:
 \[  \Txiombis
  { P  \trans{\tick} {P'} \q\;
 M   \ntrans{\tau} \q\;
\invariantfun{}(E ) \q\; 
 E' {\in} \operatorname{next}(E ) }
{M   \trans{\tick} \confCPS {E'} {P'}} \enspace . 
\]
Since    
 $M   \ntrans{\tau}$  and
$\invariantfun{}(E ) =\true$ and $ E' \in \operatorname{next}(E ) $, for some $E'$,  the only possibility  is $P  \trans{\tick} {P'}$ for no $P'$.
 Since  $    P  \trans{\tick} {P'}$ for no $P'$, by \autoref{prop:time2} we have that 
$   P  \trans{\tau} {P'}$. Since $\invariantfun{}(E ) =\true$, 
by an application of rule  \rulename{Tau}   there is $N$ such that $M \trans \tau N$. This contradicts  the initial hypothesis that $M   \ntrans{\tau}$.
\textcolor{white}{\cite{CHM15,MKN02,LM11}}
\end{proof}

The following property is well-timedness. It basically says that time passing cannot be prevented 
by infinite sequences of internal actions. 
\begin{proposition}[Well-timedness for \CPS{s}]
\label{prop:welltime:aux}
For any $M$  there is a $ k\in\mathbb{N}$ such that  if
$M  \trans{\alpha_1}\dots \trans{\alpha_n} N$, with $\alpha_i \neq \tick$,   then $n\leq k$. 
\end{proposition}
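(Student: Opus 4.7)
The plan is to reduce the claim to the analogous well-timedness property for processes, which is the last item of Proposition~\ref{prop:time2}. Let $M = \confCPS{E}{P}$. The key observation is that every non-$\tick$ transition of a \CPS{} is driven by a non-$\tick$ transition of its underlying process: inspecting the rules in \autoref{tab:lts_systems}, whenever $\confCPS{E}{P} \trans{\alpha} \confCPS{E'}{P'}$ with $\alpha \neq \tick$, the derivation must use one of \rulename{Out}, \rulename{Inp}, \rulename{SensRead}, \rulename{ActWrite}, \rulename{Tau}, and each of these rules has a premise of the form $P \trans{\lambda} P'$ for a process label $\lambda \in \{\tau, \out c v, \inp c v, \snda a v, \rcva s v\}$, all of which are distinct from $\tick$.

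From this correspondence, a straightforward induction on $n$ shows that any \CPS{} execution $M = M_0 \trans{\alpha_1} M_1 \trans{\alpha_2} \dots \trans{\alpha_n} M_n$ with each $\alpha_i \neq \tick$ induces a process-level execution $P = P_0 \trans{\lambda_1} P_1 \trans{\lambda_2} \dots \trans{\lambda_n} P_n$ with each $\lambda_i \neq \tick$, where $M_i = \confCPS{E_i}{P_i}$ for some sequence of environments $E_i$. The environments $E_i$ may change (under \rulename{ActWrite}) or be preserved (under the other rules), but this is irrelevant for counting the number of steps.

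Now, by the fourth clause of Proposition~\ref{prop:time2}, there exists a constant $k \in \mathbb{N}$, depending only on $P$, such that every non-$\tick$ process trace starting from $P$ has length at most $k$. Combining this with the reduction above, every non-$\tick$ \CPS{} trace starting from $M$ has length at most the same $k$, proving the claim.

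The proof is essentially bookkeeping; the only step that requires any attention is the rule-by-rule verification that \CPS{} non-$\tick$ actions are always induced by non-$\tick$ process actions, and this is immediate from \autoref{tab:lts_systems}. The main obstacle, such as it is, is simply ensuring the case \rulename{Time} is correctly excluded (which holds because $\alpha_i \neq \tick$), so that environments' nondeterministic evolution contributes no additional branching to the counting argument.
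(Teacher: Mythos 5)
Your proof is correct and follows essentially the same route as the paper's: the paper also reduces the claim to the process-level well-timedness of \autoref{prop:time2} by observing that every non-$\tick$ transition of a \CPS{} (derived by one of \rulename{Out}, \rulename{Inp}, \rulename{SensRead}, \rulename{ActWrite}, \rulename{Tau}) is driven by a non-$\tick$ transition of the underlying process. The only cosmetic difference is that the paper phrases the argument as a contradiction from an assumed unbounded derivation, whereas you argue directly; the substance is identical.
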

\begin{proof}
The proof is by contradiction. 
Suppose there is no $k$ satisfying the statement above.
Hence there exists an unbounded derivation 
\[
M= M_1  \trans{\alpha_1}\dots \trans{\alpha_n} M_{n+1} \trans{\alpha_{n+1}} \dots
\]
with $M_i= \confCPS {E_i} {P_i}$ and $\alpha_i \neq \tick$.

By inspection of rules of \autoref{tab:lts_systems} we have that, for any $i$,  
$M_i  \trans{\alpha_i} M_{i+1}  $ and $\alpha_i \neq \tick$ implies that $P_i  \trans{\lambda_i} P_{i+1}  $, for some $\lambda_i\neq \tick$.
Hence we have the following unbounded derivation
\[
P_1  \trans{\lambda_1}\dots \trans{\lambda_n} P_{n+1} \trans{\lambda_{n+1}} \dots
\]
with $\lambda_i \neq \tick$. In contradiction with \autoref{prop:time2}.
\end{proof}

\noindent
\textbf{Proof of \autoref{prop:time}}
\begin{proof}
The result follows by an application of \autoref{prop:timed:aux},  \autoref{prop:maxprog:aux},
\autoref{prop:patience:aux} and \autoref{prop:welltime:aux}. 
\end{proof}

In order to prove the compositionality or our bisimilarity, i.e.\ \autoref{thm:congruence}, we divide its statement in three different propositions.

In order to prove that $\approx$ preserves contextuality, we need 
a number of technical lemmas. 
\autoref{lem:aux2}    formalises a number of properties of the compound 
environment $E_1\uplus E_2$. 
\begin{lemma}
\label{lem:aux2}
Let $E_1$ and $E_2$ be two physical environments. If defined, the 
environment $E_1 \uplus E_2$  has the following properties:
\begin{enumerate} 
\item \label{uno}
 $\mathit{read\_sensor}(E_1\uplus E_2,s)  $ is equal to $\mathit{read\_sensor}(E_1 ,s)$, if $s$ is a sensor  of $E_1$, and it 
 is equal to $\mathit{read\_sensor}(E_2 ,s)$, if $s$ is a sensor  of $E_2$;

\item \label{unobis}
$v \in \mathit{read\_sensor}(E_1 ,s)$ implies that $v \in \mathit{read\_sensor}(E_1\uplus E_2,s)  $
for any sensor $s$ in $E_1$ and for any environment $E_2$;

\item \label{due}
$\mathit{update\_act}(E_1\uplus E_2,a,v)  $  is equal to $\mathit{update\_act}(E_1 ,a,v)  \uplus E_2$, if $a$ is an actuator of $E_1$, and it 
 is equal to $E_1 \uplus \mathit{update\_act}(  E_2,a,v)  $, if $a$ is an actuator of $E_2$;

\item \label{duebis}
$\mathit{update\_act}(E_1 ,a,v)  \uplus E_2$  is equal to 
$\mathit{update\_act}(E_1\uplus E_2,a,v)  $   for any actuator $a$ in $E_1$ and for any environment $E_2$;

\item \label{tre}
$\mathit{next}(E_1\uplus E_2) = \{ \, E_1'\uplus E_2'  \,:\,  
E_1' \in \mathit{next}(E_1 )  \mbox{ and }
 E_2' \in \mathit{next}( E_2) \, \}$;
\item \label{quattro}
 $\invariantfun{}(E_1\uplus E_2) =\invariantfun{}(E_1 ) \wedge \invariantfun{}( E_2)$.
\end{enumerate}
\end{lemma}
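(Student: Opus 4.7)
The plan is to prove each of the six items by directly unfolding the definitions of the auxiliary operators from \autoref{def:op_env} and the component-wise construction of the disjoint union $E_1 \uplus E_2$ given just before the statement. All six claims are essentially bookkeeping lemmas asserting that $\uplus$ commutes with the semantic operators in the expected way, so the main task is to be careful about restrictions and domains rather than to produce any new mathematical content.

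For item~\ref{uno}, I would expand $\mathit{read\_sensor}(E_1 \uplus E_2, s) = \{\xi(s) : \xi \in \measmap{}(\statefun{}, \errorfun{})\}$, rewrite $\measmap{}$ using its definition as $\{\xi_1 \uplus \xi_2 : \xi_i \in \measmap^i(\statefun{} \restrict{\hat{\mathcal{X}}_i}, \errorfun{} \restrict{\hat{\mathcal{S}}_i})\}$, and note that $(\xi_1 \uplus \xi_2)(s) = \xi_i(s)$ exactly when $s \in \hat{\mathcal{S}}_i$. Since $\statefun{} \restrict{\hat{\mathcal{X}}_i} = \statefun^i{}$ and $\errorfun{} \restrict{\hat{\mathcal{S}}_i} = \errorfun^i{}$ by construction of $\uplus$, and since the other component $\xi_j$ ($j \neq i$) can range freely over $\measmap^j$ (which is nonempty as $\measmap^j$ is a total function), the set of possible values for $\xi_i(s)$ is exactly $\mathit{read\_sensor}(E_i, s)$. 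Item~\ref{unobis} is then the $\supseteq$ direction already contained in the reasoning for item~\ref{uno}.

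For items~\ref{due} and~\ref{duebis}, I would simply observe that $\mathit{update\_act}$ only modifies $\actuatorfun{}$ at index $a$, and that $(\actuatorfun^1{} \uplus \actuatorfun^2{})[a \mapsto v] = \actuatorfun^i{}[a \mapsto v] \uplus \actuatorfun^j{}$ when $a \in \hat{\mathcal{A}}_i$. Since the other components ($\statefun{}, \uncertaintyfun{}, \evolmap{}, \errorfun{}, \measmap{}, \invariantfun{}$) of $E_1 \uplus E_2$ are already built component-wise from those of $E_1$ and $E_2$, the two sides describe the same 7-tuple. Item~\ref{quattro} is immediate from the definition $\invariantfun{}(E) = \invariantfun{}(\statefun{})$ and the stipulated form $\invariantfun{}(\xi) = \invariantfun^1{}(\xi \restrict{\hat{\mathcal{X}}_1}) \wedge \invariantfun^2{}(\xi \restrict{\hat{\mathcal{X}}_2})$.

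The only item with any real content is item~\ref{tre}, and this is where I would expect to spend the most care. Unfolding, $\mathit{next}(E_1 \uplus E_2) = \bigcup_{\xi \in \evolmap{}(\statefun{}, \actuatorfun{}, \uncertaintyfun{})} \{E_1 \uplus E_2 \text{ with state } \xi\}$. Using the disjoint-union definition of $\evolmap{}$, every such $\xi$ factors uniquely as $\xi_1 \uplus \xi_2$ with $\xi_i \in \evolmap^i{}(\statefun^i{}, \actuatorfun^i{}, \uncertaintyfun^i{})$ (the restrictions collapse because $\statefun{} = \statefun^1{} \uplus \statefun^2{}$, etc.). The resulting environment has state function $\xi_1 \uplus \xi_2$ while keeping the actuator, uncertainty, error, evolution, measurement and invariant components unchanged; but these unchanged components coincide exactly with those of $E_i'\uplus E_j'$ where $E_i'$ is $E_i$ with state $\xi_i$. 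Hence the two sides are equal as sets of environments. The mildly delicate point, which I would flag, is the bijective correspondence $\xi \leftrightarrow (\xi_1, \xi_2)$ arising from the disjointness of $\hat{\mathcal{X}}_1$ and $\hat{\mathcal{X}}_2$; once this is noted, both inclusions are immediate.
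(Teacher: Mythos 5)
Your proof is correct and takes essentially the same route as the paper, whose entire proof of this lemma is the single sentence ``By definition of the operator $\uplus$ on physical environments''; you have simply carried out that unfolding explicitly, item by item. The one point worth flagging is that your argument for item~1 silently requires $\measmap^j$ to return a \emph{nonempty} set of measurement functions (totality of the map alone does not guarantee this), but the paper glosses over the same degenerate case, so this is not a gap relative to its own standard of rigour.
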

\begin{proof}
By definition of the operator $\uplus$ on
physical environments.
\end{proof}

\autoref{lem:aux1} serves to propagate untimed actions on parallel 
\CPS{s}. 

\begin{lemma}
\label{lem:aux1}
If $M  \trans{\alpha} M'$, with  $\alpha \neq \tick$,
then $M \uplus O \trans{\alpha} M'\uplus O$, for any non-interfering \CPS{} $O$, with $\invariantfun{}(O)=\true$. 
\end{lemma}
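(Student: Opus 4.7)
The plan is to proceed by a straightforward case analysis on the last SOS rule used to derive $M \trans{\alpha} M'$. Writing $M = \confCPS{E}{P}$, $O = \confCPS{E_O}{P_O}$, and $M \uplus O = \confCPS{(E \uplus E_O)}{(P \parallel P_O)}$, since $\alpha \neq \tick$ the only rules that can have been applied at the \CPS{} level are \rulename{Tau}, \rulename{Out}, \rulename{Inp}, \rulename{SensRead}, and \rulename{ActWrite}. In every case, the premise of the rule gives us a process transition $P \trans{\lambda} P'$ with $\lambda \neq \tick$, which we immediately lift to $P \parallel P_O \trans{\lambda} P' \parallel P_O$ by an application of rule \rulename{Par} (whose side condition $\lambda \neq \tick$ is satisfied). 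The combined invariant $\invariantfun{}(E \uplus E_O) = \true$ is then obtained from the hypothesis $\invariantfun{}(E) = \true$ (part of every applicable rule's premise), the assumption $\invariantfun{}(O) = \true$, and item~\ref{quattro} of \autoref{lem:aux2}.

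For the three cases \rulename{Tau}, \rulename{Out}, and \rulename{Inp} nothing else is needed: we simply re-apply the same rule at the level $M \uplus O$ using the lifted process transition and the combined invariant, obtaining $\confCPS{(E \uplus E_O)}{(P' \parallel P_O)} = M' \uplus O$. For the case \rulename{SensRead}, the extra side condition $v \in \mathit{read\_sensor}(E,s)$ must be transported to the combined environment: since $O$ is non-interfering, $s$ is a sensor of $E$, so item~\ref{unobis} of \autoref{lem:aux2} gives $v \in \mathit{read\_sensor}(E \uplus E_O, s)$, which is what we need to re-apply \rulename{SensRead}.

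The only slightly delicate case is \rulename{ActWrite}, which produces a modified environment. Here non-interference again ensures that the actuator $a$ lives in $E$, so by item~\ref{duebis} of \autoref{lem:aux2} we have
\[
\mathit{update\_act}(E \uplus E_O, a, v) \;=\; \mathit{update\_act}(E, a, v) \uplus E_O \;=\; E' \uplus E_O,
\]
where $E'$ is the post-state environment of $M'$. Re-applying rule \rulename{ActWrite} at the level of $M \uplus O$ with this equality then produces precisely $\confCPS{(E' \uplus E_O)}{(P' \parallel P_O)} = M' \uplus O$, as required.

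The main, and really only, obstacle is to make sure that the various components of $\uplus$ commute with the auxiliary operations used in the semantics, which is precisely what \autoref{lem:aux2} states; once that lemma is in place, the case analysis is mechanical. Non-interference is used exactly to pick the correct clause of \autoref{lem:aux2} (so that sensor/actuator operations on the composite environment reduce to operations on $E$ alone), and the condition $\invariantfun{}(O) = \true$ is used exactly to preserve the invariant premise under composition.
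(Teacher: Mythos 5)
Your proof is correct and follows essentially the same route as the paper's: a case analysis on the system-level rule deriving $M \trans{\alpha} M'$, lifting the process transition with \rulename{Par} and transporting the side conditions to the compound environment via \autoref{lem:aux2} (items \ref{quattro}, \ref{unobis}, \ref{duebis}). The paper works out only the \rulename{SensRead} case and declares the others similar, whereas you sketch all five, but the underlying argument is identical.
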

\begin{proof}
The proof is by rule induction on why $M  \trans{\alpha} M'$.
Let us suppose that   
 $M=\confCPS {E_1}  {P_1}$ and   
 $O=\confCPS {E_2}  {P_2}$, for some $E_1$, $E_2$, $P_1$ and $P_2$. 
We can distinguish several cases on why  $M  \trans{\alpha} M'$.
We prove the case in which   $M   \trans{\alpha}M'$  is derived by 
an application of rule \rulename{SensRead}.
The other cases can be proved in a similar manner.
In this case, we have $\alpha=\tau$ and there are 
$s$, $v$, and $P'_1$ such that 
\[
\Txiombis
{P_1  \trans{\rcva s v} P_1'  \Q \invariantfun{}(E_1 ) \Q
\mbox{\small{$v \in \mathit{read\_sensor}(E_1 ,s)$}} 
}
{M   \trans{\tau} M' }
\]
with $M'=\confCPS {E_1 } {P'_1}$.

Since $P_1  \trans{\rcva s v} P'_1$,  by an applicaiton of rule \rulename{Par} we can derive
$P_1 \parallel P_2 \trans{\rcva s v} P'_1 \parallel P_2$.
Since  $\invariantfun{}(E_1 )=\true $ and, by hypothesis,  $\invariantfun{}(E_2 )=\true $, by an application of \autoref{lem:aux2}(\ref{quattro}) we 
derive that $\invariantfun{}(E_1 \uplus E_2 ) =\true$.
Since $v \in \mathit{read\_sensor}(E_1 ,s)$, by an application of \autoref{lem:aux2}(\ref{unobis})
we derive that $v \in \mathit{read\_sensor}(E_1\uplus E_2,s)  $. This is 
enough to derive that:
\[
\Txiombis
{P_1 \parallel P_2 \trans{\rcva s v} P'_1 \parallel P_2 \Q \invariantfun{}(E_1\uplus E_2) \Q
\mbox{\small{$v \in \mathit{read\_sensor}(E_1\uplus E_2,s)$}} 
}
{M \uplus O \trans{\tau}\confCPS {E_1\uplus E_2} {P'_1\parallel P_2} } \enspace .
\]
Hence the result follows by assuming  $M'=\confCPS {E_1 } {P'}$ and $ M' 
\uplus O=\confCPS {E_1\uplus E_2} {P'_1\parallel P_2}$.
\end{proof}

Next lemma says the invariants of bisimilar \CPS{s} must agree. 
\begin{lemma}
\label{lem:bis-inv}
$M \approx N$ implies $\invariantfun{}(M) = \invariantfun{}(N)$.
\end{lemma}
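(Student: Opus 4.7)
The plan is to argue by contradiction. Suppose $M \approx N$ but, without loss of generality, $\invariantfun{}(M) = \true$ while $\invariantfun{}(N) = \false$. The crucial observation is that every inference rule in \autoref{tab:lts_systems} carries $\invariantfun{}(E)$ among its premises, so $N$ admits no outgoing transition whatsoever: $N \ntrans{\alpha}$ for every label $\alpha$. Hence the only kind of move of $M$ that $N$ can mimic under $\approx$ is a $\tau$-move matched by zero steps on the $N$ side.

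Next I observe that the three $\tau$-generating rules for \CPS{s} --- \rulename{Tau}, \rulename{SensRead}, and \rulename{ActWrite} --- never touch the state function $\statefun{}$: the first two leave $E$ unchanged, and \rulename{ActWrite} only updates the actuator function. Since $\invariantfun{}$ depends only on $\statefun{}$, the invariant is preserved along any sequence of $\tau$-transitions starting from $M$. By well-timedness (\autoref{prop:time}(d)) such sequences have bounded length, so I can follow a maximal one to reach some $M^* = \confCPS{E^*}{P^*}$ with $M \Trans{} M^*$, $M^* \ntrans{\tau}$, and $\invariantfun{}(E^*) = \true$. Because $N$ matches each internal step by the empty weak transition, bisimilarity is preserved along the way: $M^* \approx N$.

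Now I apply patience (\autoref{prop:time}(c)) to $M^*$: since $\invariantfun{}(E^*) = \true$ and $M^* \ntrans{\tau}$, the only remaining reason for $M^*$ not to perform a $\tick$-transition would be $\operatorname{next}(E^*) = \emptyset$. Ruling out this corner case --- the physical plant is tacitly assumed throughout the paper to admit some evolution on reachable environments --- there exists $M^* \trans{\tick} M'$ for some $M'$ (note that the process-level premise $P^* \trans{\tick} P'^{\,}$ of rule \rulename{Time} follows from process-level patience applied to $P^*$, combined with $M^* \ntrans{\tau}$). Bisimilarity then demands $N \Trans{\tick} N'$ for some $N'$, contradicting $N \ntrans{\alpha}$ and closing the case.

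The main obstacle is precisely this residual corner case $\operatorname{next}(E^*) = \emptyset$, in which $M^*$ and $N$ are both irrevocably deadlocked --- one by invariant violation, the other by an exhausted evolution map --- and therefore cannot be separated by $\approx$ alone. Modulo the natural assumption that the evolution map is non-empty on reachable environments (implicit in the paper though not formally enforced by \autoref{def:physical-env}), the argument above goes through cleanly; without it, an additional hypothesis on $\operatorname{next}$ or a strengthening of the observation power of $\approx$ would be required.
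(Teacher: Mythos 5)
Your proof is correct and follows essentially the same route as the paper's: argue by contradiction, use well-timedness to reach a $\tau$-stable derivative $M^*$ that is still bisimilar to the transition-less $N$, and then force a $\tick$-move that $N$ cannot weakly match. You are in fact more careful than the paper at the decisive step, where the paper invokes maximal progress although patience is what is actually needed, thereby silently skipping both the preservation of the invariant along $\tau$-steps (which you justify explicitly) and the $\operatorname{next}(E^*)=\emptyset$ corner case, which you correctly identify as the one point where the statement tacitly relies on a non-emptiness assumption on the evolution map not enforced by the definition of physical environment.
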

\begin{proof}
The proof is by contradiction. Suppose that $M \approx N$, 
 $\invariantfun{}(M) = \true$ and $ \invariantfun{}(N)=\false$ (the other case is similar).  
 By \autoref{prop:welltime:aux}, there exists 
 a finite derivation $M  \trans{\tau}M_1   \trans{\tau}\dots \trans{\tau} M_n$, with $M_n \ntrans \tau$. 
Since $ \invariantfun{}(N)=\false$, the \CPS{s} $N$ cannot perform any 
action, and in particular, 
 $N \ntrans \tau$. From $M \approx N$  we  derive that 
$M_i \approx N$, for  $1 \leq i \leq n$.  
Since $M_n \ntrans \tau$, by \autoref{prop:maxprog:aux} it follows that  $M_n \trans \tick M'$, for some $M'$.
Since $ \invariantfun{}(N)=\false$ , we have 
 $N \ntrans \tick$ and hene also  $N  \not \!\!\!\! \Trans \tick$.

Summarising: $M_n \approx N$,  $M_n \trans \tick M'$ and $N \not \!\!\!\!\Trans \tick$ which contradict the definition of bisimilarity.
\end{proof}

Here comes one of the main technical result: the bisimilarity is
preserved by the parallel composition of non-interfering \CPS{s}. 
\begin{proposition}
\label{lem:cong1}
 $M \approx N$ implies $M \uplus O \approx N \uplus O$, for any 
non-interfering \CPS{} $O$. 
\end{proposition}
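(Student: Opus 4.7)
The plan is to exhibit a bisimulation containing every pair $(M \uplus O, N \uplus O)$ of interest. Concretely, I would work with
\[
\mathcal{R} \;=\; \{(M_1 \uplus O,\, M_2 \uplus O) \,:\, M_1 \approx M_2,\; O \text{ non-interfering with both}\},
\]
which is symmetric, and check the bisimulation transfer condition for $\mathcal{R}$. The core ingredients are: \autoref{lem:aux2} (the compound environment decomposes along disjoint plants with respect to invariants, sensor reads, actuator updates, and the $\operatorname{next}$ map), \autoref{lem:aux1} together with its obvious mirror (a non-$\tick$ transition of one side lifts to the disjoint union), and the standard stability fact that, when $M \approx N$ and $M \ntrans{\tau}$, every weak derivative $N_1$ with $N \Trans{} N_1$ still satisfies $M \approx N_1$.

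Given $M \uplus O \trans{\alpha} R$, I would do a case analysis on which component originates the move. If it comes entirely from $M$ (channel I/O lifted through Par, or a $\tau$ via SensRead/ActWrite/Tau on $M$), I would rebuild $M \trans{\alpha} M'$ at the CPS level using \autoref{lem:aux2} to transport the invariant and the sensor-range or actuator-update conditions between $E_M$ and $E_M \uplus E_O$; then $M \approx N$ yields $N \Trans{\hat\alpha} N'$, and each step of this weak derivation lifts to $N \uplus O$ via \autoref{lem:aux1}, leaving $(M' \uplus O,\, N' \uplus O) \in \mathcal{R}$. The symmetric case (move originating in $O$) is handled by the mirror lemma and leaves $M \approx N$ untouched. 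If $\alpha = \tau$ arises as a synchronisation between $M$ and $O$, say $M \trans{\out c v} M'$ and $O \trans{\inp c v} O'$, then $M \approx N$ gives $N \Trans{\out c v} N'$, which I would decompose as $N \Trans{} N_1 \trans{\out c v} N_2 \Trans{} N'$ and lift to $N \uplus O \Trans{} N_1 \uplus O \trans{\tau} N_2 \uplus O' \Trans{} N' \uplus O'$ by re-forming the synchronisation at the peak.

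The main obstacle is $\alpha = \tick$. From $M \uplus O \trans{\tick} M' \uplus O'$ and \autoref{lem:aux1} I would first derive $M \ntrans{\tau}$ and $O \ntrans{\tau}$ (any such $\tau$ would lift to a compound $\tau$, contradicting maximal progress once the compound ticks), and then \autoref{lem:aux2} applied to the process-level tick yields $M \trans{\tick} M'$ and $O \trans{\tick} O'$ with compatible environment choices. Bisimilarity then gives $N \Trans{\tick} N''$ with $M' \approx N''$, decomposed as $N \Trans{} N_1 \trans{\tick} N_2 \Trans{} N''$. The delicate step is to show $N_1 \uplus O \trans{\tick}$, i.e.\ that $N_1 \uplus O \ntrans{\tau}$. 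The only source of compound $\tau$ not already internal to $N_1$ or $O$ is a fresh synchronisation between them; however, stability of $\approx$ under $\tau$s yields $M \approx N_1$, so any $N_1 \trans{\out c v}$ (resp.\ $\trans{\inp c v}$) would be mirrored by $M \trans{\out c v}$ (resp.\ $\trans{\inp c v}$), which together with the dual offer of $O$ would produce $M \uplus O \trans{\tau}$, contradicting the original tick. Once compound $\tau$ is ruled out, \autoref{lem:aux2} rebuilds $\invariantfun{}(E_{N_1} \uplus E_O)$ and supplies an element of $\operatorname{next}(E_{N_1} \uplus E_O)$ compatible with $E'_O$, so $N_1 \uplus O \trans{\tick} N_2 \uplus O'$; the flanking $\tau$s propagate, closing the match with $(M' \uplus O',\, N'' \uplus O') \in \mathcal{R}$ via $M' \approx N''$. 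Time determinism (\autoref{prop:time}) is used tacitly to align the $O$-side of the tick on both sides up to structural congruence.
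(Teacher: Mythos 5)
There is a genuine gap, and it is precisely the point the paper flags as making this proof ``non standard'': your candidate relation $\mathcal{R}$ is too small to be closed under the transfer condition in the $\tick$ case. After $M \uplus O \trans{\tick} M' \uplus O'$ you match with $N \uplus O \Trans{} N_1 \uplus O \trans{\tick} N_2 \uplus O' \Trans{} N' \uplus O'$ and claim ``the flanking $\tau$s propagate.'' The trailing ones need not: the lifting lemma (\autoref{lem:aux1}) requires $\invariantfun{}(O') = \true$, and rule \rulename{Time} only checks the invariant of the \emph{source} environment, so $O$ may legitimately tick into an $O'$ whose invariant is false. In that case $N_2 \uplus O'$ is deadlocked (by \autoref{lem:aux2}, the compound invariant is the conjunction), the weak moves $N_2 \Trans{} N'$ cannot be lifted, and the best available match is $N_2 \uplus O'$. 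But you only know $M' \approx N'$, not $M' \approx N_2$, so the pair $(M' \uplus O',\, N_2 \uplus O')$ is not in your $\mathcal{R}$ and the bisimulation check fails. The paper repairs this by taking $\rel = \rel_1 \cup \rel_2$ where $\rel_2$ consists of all pairs of \CPS{s} whose invariants are both false: since $\invariantfun{}(O') = \false$ forces both $M' \uplus O'$ and $N_2 \uplus O'$ to be deadlocked, that pair lands harmlessly in $\rel_2$. Your proof needs this (or an equivalent ``up to deadlock'' closure) to go through.

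A secondary omission: when the move originates in $O$ (or when re-forming a synchronisation or a tick on the $N$ side), you must establish $\invariantfun{}(E_N) = \true$ before you can fire any rule of \autoref{tab:lts_systems} for $N \uplus O$. This does not follow from \autoref{lem:aux2} alone; the paper proves separately (\autoref{lem:bis-inv}, via well-timedness and maximal progress) that $M \approx N$ forces $\invariantfun{}(M) = \invariantfun{}(N)$, and invokes it at exactly these points. Everything else in your sketch --- the case analysis, the use of \autoref{lem:aux1} and \autoref{lem:aux2}, and in particular the contradiction argument showing $N_1 \uplus O \ntrans{\tau}$ by reflecting a hypothetical fresh synchronisation back through $M \approx N_1$ --- matches the paper's proof.
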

\begin{proof}
We show that the relation $\rel=\rel_1 \cup \rel_2$ is a bisimulation where: 
\[
\begin{array}{rcl}
\rel_1 & = & \left \{(M \uplus  O,\; N \uplus  O)  :  M \approx N  \right \} \\[2pt]
\rel_2  & = & \left \{(M, N )  :  \invariantfun{}(M)=\invariantfun{}(N)=\false \right\} \enspace . 
\end{array}
\]
The relation $\rel_2$ is trivially a bisimulation because it contains
pairs of deadlocked \CPS{s}. Thus, we focus on when $(M \uplus  O,\; N \uplus  O) \in \rel_1$.

We proceed by case analysis on why $M \uplus  O \trans{\alpha} \hat{M} $
(the case when  $N \uplus  O \trans{\alpha} \hat{N} $ is symmetric).

\begin{itemize}
\item 
Let  $M \uplus  O \trans{\tau} \hat{M}$, with
 $M=\confCPS {E_1}  {P_1}$ and  $O=\confCPS {E_2}  {P_2}$, 
 for some $E_1$, $E_2$, $P_1$  and $P_2$, by an application 
of rule \rulename{SensRead}. This implies that 
\[
\Txiombis
{P_1 \parallel P_2 \trans{\rcva s v} P'  \Q \invariantfun{}(E_1\uplus E_2) \Q
\mbox{\small{$v \in \mathit{read\_sensor}(E_1\uplus E_2,s)$}} 
}
{M \uplus O \trans{\tau} \hat{M} }
\]
with $ \hat{M}=\confCPS {E_1\uplus E_2} {P'}$. We recall that by definition 
of $\uplus$ the environments $E_1$ and $E_2$ have different physical devices. 
 Thus, there are two cases:
\begin{itemize}
\item $s$ is a sensor of $E_1$.

In this case,  $P_1   \trans{\rcva s v} P'_1 $, for some $P'_1$, and hence  $P'=P'_1\parallel P_2$. Since $\invariantfun{}(E_1\uplus E_2)= \true$ and 
$v \in \mathit{read\_sensor}(E_1\uplus E_2,s)$, by an 
application of \autoref{lem:aux2}(\ref{uno}) and \autoref{lem:aux2}(\ref{quattro}),
we  derive    $\invariantfun{}(E_1 ) =\invariantfun{}(E_2 ) =\true$ and  
  $v \in \mathit{read\_sensor}(E_1,s)$.
Now, let $M'=\confCPS {E_1}  {P_1'}$;  it follows that  $\hat{M} = M' \uplus O$. 
Since  $P_1   \trans{\rcva s v} P'_1 $,  $\invariantfun{}(E_1 )=\true $,    and  
  $v \in \mathit{read\_sensor}(E_1,s)$,
 by an application of rule \rulename{SensRead} we have   $M   \trans{\tau} M'$. 
As $M\approx N$, there is $  N' $ such that   $N\Trans{}  N' $ with $M' \approx N'$.
Since $\invariantfun{}(E_2 )=\true $, by several applications of \autoref{lem:aux1} it follows that $N \uplus O \Trans{}  N' \uplus O = \hat{N} $, 
with $( \hat{M}, \hat{N}) \in \rel_1 \subseteq \rel$.

\item $s$ is a sensor of $E_2$.

In this case,  $P_2   \trans{\rcva s v} P'_2 $, for some $P'_2$, and hence  $P'=P_1\parallel P'_2$. 
Let $O'=\confCPS {E_2}  {P_2'}$;  it follows that  $\hat{M} = M  \uplus O' =
\confCPS {E_1 \uplus E_2}{({P_1} \parallel {P'_2})}$.
Let $N =\confCPS {E_3}  {P_3}$, for some $E_3$ and $P_3$. By an application 
of rule~\rulename{Par}
 we have that $P_3 \parallel P_2 \trans{\rcva s v} P_3 \parallel P_2'$. Since 
$\invariantfun{}(E_1\uplus E_2)$ and 
$v \in \mathit{read\_sensor}(E_1\uplus E_2,s)$,  by an 
application of \autoref{lem:aux2}(\ref{uno}) and \autoref{lem:aux2}(\ref{quattro}), 
we  derive    $\invariantfun{}(E_1 ) =\invariantfun{}(E_2 ) =\true$ and  
  $v \in \mathit{read\_sensor}(E_2,s)$.
As $M \approx N$, by \autoref{lem:bis-inv} it follows that 
 $\invariantfun{}(E_3) =\true$, and hence  $\invariantfun{}(  E_3 \uplus E_2) =\true$.
Since  $v \in \mathit{read\_sensor}(E_2,s)$,  by \autoref{lem:aux2}(\ref{unobis}), it follows that $v \in \mathit{read\_sensor}(E_3 \uplus E_2,s)$.

Summarising  $P_3 \parallel P_2 \trans{\rcva s v} P_3 \parallel P_2'$, 
 $v \in \mathit{read\_sensor}(E_3 \uplus E_2,s)$,  and 
  $\invariantfun{}(  E_3 \uplus E_2) =\true$. Thus,   by an 
application of rule \rulename{SensRead} we have $N \uplus O \trans{\tau}   N \uplus O'$, 
 with $(  M  \uplus O' ,  N \uplus O') \in \rel_1 \subseteq \rel$.  

\end{itemize}

\item  Let  $M \uplus  O \trans{\tau} \hat{M}$, with
 $M=\confCPS {E_1}  {P_1}$ and  $O=\confCPS {E_2}  {P_2}$, 
 for some $E_1$, $E_2$, $P_1$  and $P_2$,
 by an application of rule \rulename{ActWrite}.
This case is similar to the previous ones. Basically we apply  \autoref{lem:aux2}(\ref{due}) instead of   \autoref{lem:aux2}(\ref{uno}),
and  \autoref{lem:aux2}(\ref{duebis}) instead of   \autoref{lem:aux2}(\ref{unobis}).

\item  Let  $M \uplus  O \trans{\tau} \hat{M}$, with
 $M=\confCPS {E_1}  {P_1}$ and  $O=\confCPS {E_2}  {P_2}$, 
 for some $E_1$, $E_2$, $P_1$  and $P_2$, by an  application of
 rule \rulename{Tau}:  
\[
\Txiombis{P_1 \parallel P_2 \trans{\tau} P' \Q \invariantfun{}(E_1 \uplus E_2)}
{ M \uplus O \trans{\tau} \hat{M} }
\]
  with $\hat{M}= \confCPS { E_1 \uplus E_2}  {P'}$. We  can distinguish four cases.

\begin{itemize}
\item  Let $P_1 \parallel P_2 \trans{\tau} P'$ by an application of rule \rulename{Par}, because  $P_1 \trans \tau P_1'$ and $P'=P_1'\parallel P_2$, for some $P'_1$. Since $\invariantfun{}(E_1 \uplus E_2)$, 
by   \autoref{lem:aux2}(\ref{quattro}),
$\invariantfun{}(E_1 )=\invariantfun{}(  E_2) =\true$.
Let $M'=\confCPS {E_1}  {P_1'}$;  we have that  $\hat{M} = M' \uplus O$.
Since  $P_1 \trans \tau P_1'$ and $\invariantfun{}(E_1 ) =\true$,
 by an application of rule \rulename{Tau} we derive   $M   \trans{\tau} M'$. 
As $M\approx N$, there is $  N' $ such that   $N \Trans{}  N' $ with $M' \approx N'$.
Since $\invariantfun{}(E_2 )=\true $, by several applications 
of \autoref{lem:aux1} we have that $N \uplus O \ttrans{} = N' \uplus O = 
\hat{N}$, with  $( \hat{M}, \hat{N}) \in \rel_1 \subseteq \rel$.  

\item  
 Let $P_1 \parallel P_2 \trans{\tau} P'$ by an application of rule \rulename{Par}, because 
$P_2 \trans\tau P_2'$ and $P'=P_1\parallel P_2'$, for some $P'_2$.
Let $O'=\confCPS {E_2}  {P_2'}$;  it follows that  $\hat{M} = M  \uplus O' =
\confCPS {E_1 \uplus E_2}{({P_1} \parallel {P'_2})}$.
Let $N =\confCPS {E_3}  {P_3}$, for some $E_3$ and $P_3$. By an application 
of rule~\rulename{Par}
 we have that $P_3 \parallel P_2 \trans{\tau} P_3 \parallel P_2'$. Since 
$\invariantfun{}(E_1\uplus E_2)$, by an 
application of  \autoref{lem:aux2}(\ref{quattro}), 
we  derive    $\invariantfun{}(E_1 ) =\invariantfun{}(E_2 ) =\true$. 
As $M \approx N$, by \autoref{lem:bis-inv} it follows that 
 $\invariantfun{}(E_3) =\true$, and hence  $\invariantfun{}(  E_3 \uplus E_2) =\true$.

 Summarising
  $P_3 \parallel P_2 \trans{\tau} P_3 \parallel P_2'$ and 
  $\invariantfun{}(  E_3 \uplus E_2) =\true$.
Thus, by an  application of rule \rulename{Tau} we have $N \uplus O \trans{\tau}  N \uplus O'= \hat{N}$, with $( \hat{M}, \hat{N}) \in \rel_1 \subseteq \rel$.

\item Let $P_1 \parallel P_2 \trans{\tau} P'$ by an application of rule \rulename{Com} because  $P_1 \trans {\out c v}    P_1'$ and  $P_2 \trans{\inp c v} P_2'$ and $P'=P_1'\parallel P_2'$, for some $P'_1$ and $P'_2$.
Since $\invariantfun{}(  E_1 \uplus E_2) =\true$, 
by   \autoref{lem:aux2}(\ref{quattro}) follows that 
 $\invariantfun{}(E_1 )=\invariantfun{}(E_2) =\true$.
Let $M'=\confCPS {E_1}  {P_1'}$ and $O'=\confCPS {E_2}  {P_2'}$;  we have that  $\hat{M} = M' \uplus O'$. Since
  $P_1 \trans {\out c v}    P_1'$ and $\invariantfun{}(E_1 ) =\true$,
 by an application of rule \rulename{Out} we have  $M   \trans{\out c v} M'$. 
As $M\approx N$, there are $N_1$, $N_2$ and $N'$ such that 
$N \Trans{} N_1 \trans{\out c v}N_2 \Trans{} N'$,  with $M'\approx N'$. 
Since $\invariantfun{}(E_2 )=\true $, by an appropriate number of applications of  \autoref{lem:aux1} we have that $N \uplus O \Trans{} N_1 \uplus O$. 
Moreover, 
  $  N_1 \trans{\out c v}N_2  $ implies that $P_3\trans{\out c v} P_3'$ for some $P_3$ and $P_3'$ and $E_3$ such that
   $  N_1=\confCPS {E_3}  {P_3}$ and $N_2=\confCPS {E_3}  {P_3'}$ and $\invariantfun{}(E_3 )=\true $.
Since $P_2 \trans{\inp c v} P_2'$
we can use rules \rulename{Com} to derive $P_2 \parallel P_3 \trans{\tau} P_2' \parallel P_3'$.
Moreover from the fact that both  $\invariantfun{}(E_2 ) =\true$ and  $\invariantfun{}(E_3 ) =\true$  
we can derive, by \autoref{lem:aux2}(\ref{quattro}), that $\invariantfun{}(E_3 \uplus E_2 ) =\true$.

Summarising  $P_2 \parallel P_3 \trans{\tau} P_2' \parallel P_3'$ and $\invariantfun{}(E_3 \uplus E_2 )=\true $, and,  for 
 $O=\confCPS {E_2}  {P_2}$ 
and $O'=\confCPS {E_2}  {P_2'}$,
 we can use  rule \rulename{Tau}
to derive $N_1 \uplus O \trans{\tau} N_2 \uplus O'$.
Since $\invariantfun{}(E_2 ) =\true$, 
by an appropriate number of applications of   
\autoref{lem:aux1},  we get  $N_2 \uplus O' \Trans{} N' \uplus O'$.
As 
  $M' \approx N'$, it follows that  $\big(M' \uplus O' \, , \, N' \uplus O' \big) \in \rel_1 \subseteq \rel$. 

\item Let $P_1 \parallel P_2 \trans{\tau} P'$ by an application of rule \rulename{Com} because 
$P_1 \trans{\inp c v} P_1'$ and  $P_2 \trans{\out c v} P_2'$, for some $P'_1$ and $P'_2$.
This case   is similar to the previous one.
\end{itemize}
\end{itemize}

\begin{itemize}

\item Let  $M \uplus O \trans{\tick} \hat{M} $, with
 $M=\confCPS {E_1}  {P_1}$ and  $O=\confCPS {E_2}  {P_2}$ ,  
 for some $E_1$, $E_2$,  $P_1$  and $P_2$. 
This action can be derived only by an application of   rule \rulename{Time}:
 \[  \Txiombis
  { P_1\parallel P_2  \trans{\tick} {P'} \Q
 M \uplus O \ntrans{\tau} \Q
\invariantfun{}(E_1 \uplus E_2) \Q
 E' \in \operatorname{next}(E_1 \uplus E_2) }
{M \uplus O \trans{\tick} \hat{M}}
\]
  with $\hat{M}=\confCPS {E'} {P'}$.
  
The derivation  $P_1\parallel P_2  \trans{\tick} {P'} $ follows by an 
application of rule \rulename{TimePar} because $P_1   \trans{\tick} {P'_1} $ and 
$ P_2  \trans{\tick} {P'_2} $, for some $P_1'$ and $P_2'$, such that  $P'=P_1'\parallel P_2'$.
Since $\invariantfun{}(E_1 \uplus E_2)$, by \autoref{lem:aux2}(\ref{quattro}) 
follows that  $\invariantfun{}(E_1 ) =\invariantfun{}(  E_2) =\true$. 
Since $ E' \in \operatorname{next}(E_1 \uplus E_2)$,  by \autoref{lem:aux2}(\ref{tre}) follows that 
$ E'=E_1' \uplus E_2'$, for some $E_1' \in \operatorname{next}(E_1  ) $ and
$E_2' \in \operatorname{next}(E_1  ) $.
Furthermore, since $ M \uplus O\ntrans{\tau} $, 
by \autoref{lem:aux1} follows that $ M   \ntrans{\tau} $ and $ O \ntrans{\tau} $.

Thus, since  $P_1   \trans{\tick} P'_1 $, 
 $ M  \ntrans{\tau} $, 
$\invariantfun{}(E_1 ) =\true$, and  $E_1' \in \operatorname{next}(E_1  ) $,
 by an application of rule \rulename{Time} it follows  that  
$M \trans{\tick} M'$, with $M'=\confCPS {E_1'} {P_1'}$.
Similarly, from $P_2  \trans{\tick} {P'_2} $, 
 $ O  \ntrans{\tau} $, 
$\invariantfun{}(E_2) =\true$, and  $E_2' \in \operatorname{next}(E_2 ) $, we can derive that $O \trans{\tick} O'$, 
with  $O'=\confCPS  {E_2'} {P_2'}$. As a consequence, 
  $\hat{M} =M' \uplus O'=\confCPS {E_1' \uplus E_2'} {P_1' \parallel P_2'}$. 

Now, from $M \approx N$ and $M \trans{\tick} M'$, there are $N_1$, $N_2$ and $N'$ such that 
$N \Trans{} N_1 \trans{\tick}N_2 \Trans{} N'$,  with $M'\approx N'$. 
Since $\invariantfun{}(E_2 )=\true $, by an appropriate number of applications of  \autoref{lem:aux1} we have that $N \uplus O \Trans{} N_1 \uplus O$. Next,
 we show that we can apply rule \rulename{Time}
to derive $N_1 \uplus O \trans{\tick} N_2 \uplus O'$. For that we only need to prove
that $N_1 \uplus O \ntrans{\tau}$. We reason by contradiction. 
Since $M \approx N$ and $N \Trans{} N_1$,  
 there is $M_1$ such that $M \Trans{} M_1$, with $M_1 \approx N_1$. 
Since $M \ntrans{\tau}$, it follows that $M=M_1 \approx N_1$. Since $N_1 \trans{\tick} N_2$ and $O \trans{\tick} O'$, by an application of \autoref{prop:maxprog}(b) we can derive 
$N_1 \ntrans{\tau}$ and $O \ntrans{\tau}$. Thus,  $N_1 \uplus O \trans{\tau}$
could be derived only by an application of rule \rulename{Com} where 
$N_1$ interact with $O$, via some channel $c$.
However, as $N_1 \approx M$ the network $M$ could mimic the same interaction
(via the same channel $c$) with $O$, giving rise to a reduction of the form $M \uplus O \Trans{}\trans{\tau}$.  This is in contradiction 
with the initial premises that $M \uplus O \ntrans{\tau}$. Thus, $N_1 \uplus O \ntrans{\tau}$ and by an application of rule \rulename{Time} we can derive $N_1 \uplus O \trans{\tick} N_2 \uplus O'$. It remanins to determine the 
possible evolutions of $ N_2 \uplus O'$. 

There are two cases:
\begin{itemize}
\item The invariant of $O'$ is true.

In this case, by an appropriate number of applications of   
\autoref{lem:aux1} we get  $N_2 \uplus O' \Trans{} N' \uplus O'$.
Thus, 
$N \uplus O \Trans{\tick} N' \uplus O'$, 
with  $(M' \uplus O' \, , \, N' \uplus O') \in \rel_1 \subseteq \rel$, 
because $M' \approx N'$.

\item The invariant of $O'$ is   false.

In this case, 
 we have that $N \uplus O \Trans{\tick} N_2 \uplus O'$, 
with $(M' \uplus O' \, , \, N' \uplus O') \in \rel_2 \subseteq \rel$,  because  $\invariantfun{}(O')=\false$.

\end{itemize}

\item Let $M \uplus O \trans{\inp c v} \hat{M}$, with
 $M=\confCPS {E_1}  {P_1}$ and  $O=\confCPS {E_2}  {P_2}$,
 for some $E_1$, $E_2$, $E'$, $P_1$, $P_2$ and $P'$. 
This derivation can be only due to an application of   rule \rulename{Inp}:
\[
\Txiombis
{P_1 \parallel P_2 \trans{\inp c v}  P' \Q \invariantfun{}(E_1 \uplus E_2)}
{M \uplus O \trans{\inp c v} \hat{M}}
\]
with   $\hat{M}=\confCPS {E_1 \uplus E_2} {P'}$. We distinguish two cases. 
\begin{itemize}
\item $P_1 \trans{\inp c v} P_1'$, for some $P_1'$ such that $P=P_1' \parallel P_2$.

Then, let $M'=\confCPS {E_1}  {P_1'}$;  we have that  $\hat{M} = M' \uplus O$.
Since $\invariantfun{}(E_1 \uplus E_2)$, by   \autoref{lem:aux2}(\ref{quattro}),
it follows that $\invariantfun{}(E_1 )=\invariantfun{}(  E_2)=\true $.
Since $P_1 \trans{\inp c v} P_1'$ and  $\invariantfun{}(E_1 )=\true $, by
 an application of  \rulename{Inp} on $M$ we can derive   $M   \trans{\inp c v} M'$. 
As $M\approx N$, there is $  N' $ such that   $N\ttrans{\inp c v}  N' $ with $M' \approx N'$.
Since $\invariantfun{}(E_2 ) =\true$, by several applications of
 \autoref{lem:aux1} we have that $N \uplus O \ttrans{\inp c v} = N' \uplus O 
=\hat{N} $, with $( \hat{M}, \hat{N}) \in \rel_1 \subseteq \rel$.

\item $P_2 \trans{\inp c v} P_2'$, for some $P_2'$ such that $P=P_1  \parallel P_2'$.

Let $O'=\confCPS {E_2}  {P_2'}$;  we have that  $\hat{M} = M  \uplus O'$.
Since $\invariantfun{}(E_1 \uplus E_2 ) =\true$, 
by \autoref{lem:aux2}(\ref{quattro}) it follows  that    $\invariantfun{}(E_2 )=\true $. Let $N =\confCPS {E_3}  {P_3}$, for some $E_3$ and $P_3$.
Since $M \approx N$ and  $\invariantfun{}(E_2 )=\true $, 
by \autoref{lem:bis-inv} we derive  $\invariantfun{}(E_3)=\true $. 
By \autoref{lem:aux2}(\ref{quattro}) it follows that 
that  $\invariantfun{}(  E_3 \uplus E_2) =\true$.
Furthermore,  by an application of rule \rulename{Par}
 we have  $P_3 \parallel P_2 \trans{\inp c v} P_3 \parallel P_2'$.

 Summarising: 
  $P_3 \parallel P_2 \trans{\inp c v} P_3 \parallel P_2'$ and 
  $\invariantfun{}(  E_3 \uplus E_2) =\true$.
Thus,  by an application of rule \rulename{Inp} we derive $N \uplus O \ttrans{\inp c v} = N \uplus O' =\hat{N}$, with 
 $( \hat{M}, \hat{N}) \in \rel_1 \subseteq \rel$.  

\end{itemize}
\item Let $M \uplus O \trans{\out c v} \hat{M}$. This case is similar to the previous one. 
\end{itemize}
\end{proof}

Now, let us prove the our bisimilarity is preserved by parallel 
composition of non-interfering processes. This is a special 
case of the previous result. 
\begin{proposition}
\label{lem:cong2}
 $M \approx N$ implies $M \parallel P \approx N \parallel P$, for any 
non-interfering 
process $P$.
\end{proposition}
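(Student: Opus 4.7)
The plan is to reduce this statement to \autoref{lem:cong1} rather than redoing a case analysis on transitions. The idea is that a non-interfering process $P$ can be viewed as a degenerate CPS whose physical plant is trivial. Concretely, I would introduce the empty physical environment $E_\emptyset$ with $\hat{\mathcal{X}} = \hat{\mathcal{A}} = \hat{\mathcal{S}} = \emptyset$, whose state, actuator, uncertainty and error functions are the unique empty functions in $\mathbb{R}^{\emptyset}$, whose evolution map returns $\{\emptyset\}$ on the unique triple of empty functions, whose measurement map returns $\{\emptyset\}$, and whose invariant function maps the empty state function to $\true$. Then $O_P = \confCPS{E_\emptyset}{P}$ is a well-formed CPS, since $P$ is non-interfering and therefore mentions no sensors or actuators that would need to be defined in $E_\emptyset$.

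Next I would observe that $O_P$ does not interfere with any CPS, because its sets of state variables, sensors, and actuators are empty and hence trivially disjoint from those of $M$ and of $N$. More importantly, for any CPS $M = \confCPS{E_M}{P_M}$, the two CPSs $M \parallel P$ and $M \uplus O_P$ are literally identical. Their cyber components are both $P_M \parallel P$, and on the environment side $E_M \uplus E_\emptyset = E_M$: checking component-wise, the state, actuator, uncertainty, and sensor-error functions are unchanged because the empty function is neutral for $\uplus$; and by parts (\ref{tre}) and (\ref{quattro}) of \autoref{lem:aux2}, the evolution map, the measurement map, and the invariant function of the disjoint union also reduce to those of $E_M$. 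The same identification applies to $N$, giving $N \parallel P = N \uplus O_P$.

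With these two identifications in place, the congruence follows in one step: from the hypothesis $M \approx N$ and the fact that $O_P$ is non-interfering, \autoref{lem:cong1} yields $M \uplus O_P \approx N \uplus O_P$, which is exactly $M \parallel P \approx N \parallel P$.

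The only delicate point, and the main obstacle, is the bookkeeping verification that $E_\emptyset$ really satisfies \autoref{def:physical-env} and that $E_M \uplus E_\emptyset$ is the same 7-tuple as $E_M$; this is essentially an exercise in tracking how the operators $\uplus$, $\operatorname{next}$, $\mathit{read\_sensor}$, $\mathit{update\_act}$, and $\invariantfun{}$ act when one summand has empty carriers. Everything else is absorbed by \autoref{lem:cong1}, so no further bisimulation argument is needed.
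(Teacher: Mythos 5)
Your proposal is correct and follows essentially the same route as the paper: both introduce the empty environment $E_\emptyset$, view the non-interfering process $P$ as the degenerate \CPS{} $\confCPS{E_\emptyset}{P}$, and conclude by \autoref{lem:cong1}. The only (harmless) difference is that you claim $M \parallel P$ and $M \uplus \confCPS{E_\emptyset}{P}$ are literally the same \CPS{}, whereas the paper only asserts they are bisimilar and invokes transitivity of $\approx$.
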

\begin{proof}
We have to prove that  
 $M \approx N$ implies $M \parallel P \approx N \parallel P$, for any 
process $P$ which does not access any physical device.

Let $E_\emptyset$ be the environment with an empty set of state variables, sensors and actuators. It is straightforward to prove that 
 $M \parallel P \approx M\uplus (\confCPS {E_\emptyset}  {P }) $ and $  N \parallel P \approx N\uplus (\confCPS {E_\emptyset}  {P })$.
Since $\approx$ is preserved by the operator $\uplus$, the result follows
by transitivity of $\approx$.
\end{proof}

Finally, we prove that bisimilarity is preserved by channel restriction. 
\begin{proposition}
\label{lem:cong3}
$M \approx N$ implies $M {\setminus} c  \approx N {\setminus} c$, for any 
channel $c$.
 \end{proposition}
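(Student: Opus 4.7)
The plan is to proceed in the standard CCS manner: exhibit an explicit bisimulation containing every pair of the form $(M{\setminus}c, N{\setminus}c)$ with $M \approx N$. Concretely, set
\[
\RR \; = \; \{\,(M{\setminus}c,\, N{\setminus}c)\,:\,M \approx N\,\},
\]
which inherits symmetry from $\approx$; the whole proof then reduces to showing that $\RR$ is a bisimulation in the sense of \autoref{def:bisimulation}.

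The technical heart is a transparency lemma stating that, for every label $\alpha \notin \{\inp{c}{v},\out{c}{v}\}$, one has $M{\setminus}c \trans{\alpha} M'$ if and only if there exists $M''$ with $M \trans{\alpha} M''$ and $M' = M''{\setminus}c$. The forward direction follows by inversion: whichever CPS rule of \autoref{tab:lts_systems} derives the transition, it draws on a process-level transition that must itself come through rule ChnRes, whose side condition $\lambda \notin \{\inp{c}{v},\out{c}{v}\}$ is precisely what rules out the two blocked labels. For the backward direction, ChnRes is transparent for the labels $\tau$, $\rcva{s}{v}$, $\snda{a}{v}$, $\tick$, and every input/output on a channel different from $c$; since restriction also leaves the environment $E$, its invariant, and its evolution map untouched, every applicable CPS rule for $M$ lifts verbatim to $M{\setminus}c$. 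In particular, taking $\alpha = \tau$ gives $M \trans{\tau}$ iff $M{\setminus}c \trans{\tau}$, so the maximal-progress side condition of rule Time agrees on both sides.

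With this lemma in hand, the bisimulation clause is routine. Given a challenge $M{\setminus}c \trans{\alpha} M'$, I extract $M \trans{\alpha} M''$ with $M' = M''{\setminus}c$, invoke $M \approx N$ to obtain $N \Trans{\hat{\alpha}} N''$ with $M'' \approx N''$, and lift the weak transition of $N$ one $\tau$- or $\alpha$-step at a time via the backward direction of the lemma, yielding $N{\setminus}c \Trans{\hat{\alpha}} N''{\setminus}c$; this pair belongs to $\RR$ by construction. I do not expect a real obstacle here: the only point deserving a moment of care is that a $\tau$-step of $N$ generated by a Com synchronisation over the restricted channel $c$ still survives under restriction, but this is immediate because ChnRes filters on the raw label emitted by Com, which is $\tau$ itself, not $\inp{c}{v}$ or $\out{c}{v}$.
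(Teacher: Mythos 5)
Your proposal is correct and follows essentially the same route as the paper, which simply exhibits the relation $\{(M{\setminus}c,\,N{\setminus}c) : M\approx N\}$ and declares that one proceeds by case analysis on why $M{\setminus}c \trans{\alpha} \hat{M}$. Your transparency lemma (including the observation that the CPS-level $\tau$'s generated by \rulename{SensRead}, \rulename{ActWrite}, \rulename{Com} on $c$, and \rulename{Tau} all pass through \rulename{ChnRes}, so the maximal-progress side condition of \rulename{Time} is unaffected by restriction) is exactly the content that the paper's one-line case analysis leaves implicit.
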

\begin{proof}
It is enough to show that the relation 
\( \{ \big( M {\setminus} c  \, , \,   N {\setminus} c \big) :  M \approx N\}
\)
is a bisimulation. The proof proceeds by case analysis on 
why $ M\setminus c \trans{\alpha} \hat{M}$. 
\end{proof}

\noindent
\textbf{Proof of \autoref{thm:congruence}}
\begin{proof}
By \autoref{lem:cong1}, \autoref{lem:cong2} and \autoref{lem:cong3}.
\end{proof}

In order to prove \autoref{prop:sys} and \autoref{prop:X} 
we use the following lemma that 
formalises the  invariant properties binding 
the state variable $\mathit{temp}$ with the activity of the cooling system.

Intuitively,  when the cooling system 
is inactive the value of the state variable $\mathit{temp}$ lays in the 
interval $[0, 11+\epsilon+\delta]$. Furthermore, if the coolant 
is not active and the variable $\mathit{temp}$ lays in the interval 
  $(10+\epsilon, 11+\epsilon+\delta]$ then the cooling will be turned on in the next 
time slot. 
Finally, when active then cooling system will remain so for 
$k\in1..5$   time slots (counting also the current time slot) being the variable $\mathit{temp}$  in the real interval 
 $( 10-\epsilon-k{*}(1{+}\delta) , 11+\epsilon+\delta-k{*}(1{-}\delta)]$. 

\begin{lemma} 
\label{lem:sys}
Let $\mathit{Eng}$ be the system defined in \autoref{sec:case-study}.
Let
\begin{small}
\begin{displaymath}
\mathit{Eng} = \mathit{Eng_1} \trans{t_1}\trans\tick 
\mathit{Eng_2}\trans{t_2}\trans\tick  \dots 
\trans{t_{n-1}}\trans\tick  \mathit{Eng_n}
\end{displaymath}
\end{small}
such that the traces $t_j$ contain no $\tick$-actions, for any $j \in  1 .. n{-}1 $,  and for any  $i \in  1 .. n $ $\mathit{Eng_i}= \confCPS {E_i}{P_i} $ with 
$E_i = \envCPS 
{\statefun^i{}} 
{\actuatorfun^i{}} 
{ \delta }  
{\evolmap{}}
{ \epsilon }  
{\measmap{}}   
{\invariantfun{}}$.
Then, for any $i \in 1 .. n{-}1 $ we have the following:
\begin{enumerate}

\item \label{one}
 if   $ \actuatorfun^i{}(\mathit{cool})= \off $ then
 $\statefun^i{}(\mathit{temp})  \in [0, 11+\epsilon+\delta ]$; 

\item \label{two}
  if   $ \actuatorfun^i{}(\mathit{cool})= \off $ and 
$\statefun^i{}(\mathit{temp})\in (10+\epsilon, 11+\epsilon+\delta ]$ then, in the next time slot,  $\actuatorfun^{i{+}1}{}(\mathit{cool})=\on$;

\item \label{three}
 if  $ \actuatorfun^i{}(\mathit{cool})=\on$ then   $\statefun^i{}(\mathit{temp}) \in ( 10-\epsilon-k {*}(1{+}\delta) , 11+\epsilon+\delta -k{*}(1{-}\delta)] $, 
for some  $k  \in 1 .. 5 $   such that $\actuatorfun^{i-k}{}(\mathit{cool})=\off $ and 
$\actuatorfun^{i-j}{}(\mathit{cool}) =\on $, for $j \in 0..k{-}1$. 
\end{enumerate}
\end{lemma}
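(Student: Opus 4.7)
The plan is to establish the three invariants simultaneously by strong induction on the slot index $i$. The base case $i = 1$ is immediate: $\statefun^1(\mathit{temp}) = 0$ lies in $[0, 11+\epsilon+\delta]$, verifying Claim~1, while Claims~2 and~3 hold vacuously since the temperature is $0$ and $\actuatorfun^1(\mathit{cool}) = \off$.

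For the inductive step I would exploit the fact that the shape of $P_i$ is rigidly determined by $\actuatorfun^i(\mathit{cool})$ together with the value $k$ coming from IH Claim~3. Concretely, $\actuatorfun^i(\mathit{cool}) = \off$ forces $P_i \equiv \mathit{Ctrl}$, since both routes that leave the actuator off---the else-branch of $\mathit{Ctrl}$ and the branch after writing $\off$ inside $\mathit{Cooling}$---pass through $\tick.X$ before the next slot; whereas $\actuatorfun^i(\mathit{cool}) = \on$ with parameter $k \in 1..5$ forces $P_i$ to be the residual $\tick^{5-k}.\,\text{(sensor-read continuation)}$ of the unfolded $\mathit{Cooling}$ body. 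I would then case-split on the four combinations of $(\actuatorfun^i(\mathit{cool}), \actuatorfun^{i+1}(\mathit{cool}))$ and compute $\statefun^{i+1}(\mathit{temp})$ using the evolution law ($heat=+1$ when off, $heat=-1$ when on) with $\gamma \in [-\delta,\delta]$: $(\off,\off)$ forces the reading in slot $i$ to be $\leq 10$, giving $\statefun^i(\mathit{temp}) \leq 10+\epsilon$ and hence Claim~1 at $i{+}1$; $(\off,\on)$ forces the reading to be $>10$, giving $\statefun^i(\mathit{temp}) > 10-\epsilon$, which combined with IH Claim~1 yields exactly the $k=1$ interval of Claim~3 at $i{+}1$; $(\on,\on)$ corresponds to $k<5$ and shifts the IH Claim~3 interval by $-1 \pm \delta$ to give Claim~3 at $i{+}1$ with $k{+}1$; and $(\on,\off)$ corresponds to $k=5$ with the reading $\leq 10$, shifting the $k=5$ interval by $+1 \pm \delta$ to a range comfortably inside $[0, 11+\epsilon+\delta]$. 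Claim~2 at $i{+}1$ is then direct whenever $\actuatorfun^{i+1}(\mathit{cool}) = \off$: if $\statefun^{i+1}(\mathit{temp}) > 10+\epsilon$ the entire measurement interval lies strictly above $10$, so the $\mathit{Ctrl}$ process in slot $i{+}1$ is forced into the cooling branch.

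The main obstacle is handling $k=5$. There, $P_i$ sits at the post-countdown sensor read, and in principle either the warning branch or the turn-off branch may fire, depending on which value the nondeterministic $\mathit{read\_sensor}(E_i, s_{\mathrm{t}})$ returns. To preclude the case $(\on,\on)$ with $k=5$---which would push $k$ to $6$ and break Claim~3 at $i{+}1$---I must argue that no reading above $10$ is possible, i.e.\ $\statefun^i(\mathit{temp}) + \epsilon \leq 10$. Using the Claim~3 upper bound $\statefun^i(\mathit{temp}) \leq 6+\epsilon+6\delta$, this reduces to the arithmetic inequality $2\epsilon + 6\delta \leq 4$, which the case-study values $\epsilon=0.1$ and $\delta=0.4$ satisfy ($2.6 \leq 4$). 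The subtle point is thus the careful distinction between universal (``every reading forces this branch'') and existential (``some reading enables this branch'') quantification over $\mathit{read\_sensor}$; once that is pinned down, the remainder is routine bookkeeping of the $\tau$-steps under the LTS of Tables~\ref{tab:lts_processes} and~\ref{tab:lts_systems}.
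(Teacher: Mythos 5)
Your proposal is correct and follows essentially the same route as the paper: simultaneous induction over the time slots, a case split on the actuator state in consecutive slots, propagation of the temperature interval through the evolution law with $heat=\pm 1$ and $\gamma\in[-\delta,\delta]$, and the crucial observation that at $k=5$ the whole measurement interval lies below $10$ so the coolant is necessarily switched off. Your isolation of that last point as the general inequality $2\epsilon+6\delta\leq 4$ is a slightly cleaner packaging of the concrete computation the paper performs (it shows $v$ at $k=5$ lies in $(2.9,8.5]$, hence every reading is below $10$), but the argument is the same.
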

\begin{proof}
Let  us denote with  $v_i$  the values  of
the state variable $\mathit{temp}$ in the systems 
 $\mathit{Eng_i}$, i.e., $\statefun^i{}  (\mathit{temp})=v_i $.
Moreover  we will say that  the coolant is active (resp., is not active) in  $\mathit{Eng_i}$ if $\actuatorfun^i{}(\mathit{cool})=\on$
(resp., $\actuatorfun^i{}(\mathit{cool})=\off$).

The proof is  by mathematical induction on $n$, i.e., the 
number $\tick$-actions of our traces. 

The case base $n=1$ follows directly from the definition of $\mathit{Eng}$. 

Let prove the inductive case. 
We assume that the three statements holds for $n-1$ and we prove that they  
also hold for $n$.
\begin{enumerate}[noitemsep]
\item Let us assume that the cooling  is not active  in $\mathit{Eng_{n}}$, 
then we prove that $v_n \in [0, 11+\epsilon +\delta ]$. 

   We consider separately the   cases in which  the coolant is active or not  in $\mathit{Eng_{n-1}}$
\begin{itemize}[noitemsep]
\item 
 Suppose the coolant is not active  in $\mathit{Eng_{n{-}1}}$ (and inactive in  $\mathit{Eng_{n}}$).

By inductive hypothesis we have 
$v_{n-1} \in [0, 11+\epsilon +\delta ]$. Furthermore,   if   
$v_{n-1} \in (10+\epsilon , 11+\epsilon +\delta ]$ then, by inductive hypotheses,  the coolant must be active   in $\mathit{Eng_{n}}$.
Since we know  in $\mathit{Eng_n}$ the cooling is not active
it follows that 
$v_{n-1} \in [0, 10+\epsilon ]$.
Furthermore, $\mathit{Eng_{n}}$  the temperature
will increase of a value laying in the interval $[1-\delta,1+\delta]=[0.6,1.4]$. Thus $v_{n}$ will be  in 
$ [0.6, 11+\epsilon +\delta ]\subseteq[0, 11+\epsilon +\delta ]$.

\item 
 Suppose the coolant is active  in $\mathit{Eng_{n{-}1}}$ (and  inactive in  $\mathit{Eng_{n}}$).

By inductive hypothesis   $v_{n-1} \in (
10-\epsilon -k *(1+\delta) , 11+\epsilon +\delta -k*(1-\delta)] $ for some $k \in 1..5$ such that the coolant is
not active in $\mathit{Eng_{n{-}1{-}k}}$ and is active in $\mathit{Eng_{n{-}k}},
\ldots, \mathit{Eng_{n-1}}$.

The case $k \in \{1,\ldots,4\}$  is not admissible. 
In fact if  $k \in \{1,\ldots,4\}$ then the coolant would be active for less than $5$ $\tick$-actions as we know that 
$\mathit{Eng_{n}}$ is inactive. 

Hence it must be $k=5$. Since $\delta=0.4$, $\epsilon=0.1$  and $k=5$, it holds that $v_{n-1 }\in (10-0.1 -5*1.4, 11+ 0.1 +0.4 -5*0.6]=(2.8, 8.6] $. Moreover, since
the coolant is active for $5$ $\tick$ actions, the controller of 
$\mathit{Eng_{n{-}1}}$ checks the
temperature.  However, since $v_{n-1} \in (2.8, 8.6] $ then the coolant is turned off. Thus, in the next time slot,  the temperature
will increase of a value in $[1-\delta,1+\delta]=[0.6,1.4]$. As
a consequence in $\mathit{Eng_{n}}$  we will have $v_{n}
\in [2.8+0,6, 8.6+1.4]=[3.4,10] \subseteq [0, 11+\epsilon +\delta ]$.
\end{itemize}

\item Let us assume that the coolant  is not active  in $\mathit{Eng_{n}}$ and   $v_n \in (10+\epsilon , 11+\epsilon +\delta ]$, then  we prove that 
 the coolant is active  in $\mathit{Eng_{n{+}1}}$.
Since the coolant is not active in
$\mathit{Eng_{n}}$ then it will check the
temperature before the next time slot. Since $v_n \in (10+\epsilon , 11+\epsilon +\delta ]$ and $\epsilon=0.1$, then the
process $\mathit{Ctrl}$ will sense a temperature greater than $10$ and 
the coolant will be turned on. Thus the coolant will be active in
$\mathit{Eng_{n{+}1}}$.

\item Let us assume that the coolant is active in
$\mathit{Eng_{n}}$, then  we prove that  $v_{n} \in ( 10-\epsilon -k *(1+\delta), 11+\epsilon +\delta -k*(1-\delta)] $ for some $k \in 1..5 $
 and  the coolant is not active in $\mathit{Eng_{n{-}k}}$ and  active
in $\mathit{Eng_{n-k+1}}, \dots, \mathit{Eng_{n}}$.


   We separate the    case in which  the coolant is active  in $\mathit{Eng_{n{-}1}}$ from that in which is not active. 

\begin{itemize}[noitemsep]
\item 
 Suppose the coolant is not active  in $\mathit{Eng_{n{-}1}}$ (and active in $\mathit{Eng_{n}}$).

In this case $k=1$ as the coolant is not
active in $\mathit{Eng_{n-1}}$ and it is active in $\mathit{Eng_{n}}$. 
Since  $k=1$, we have to prove $v_n \in (10-\epsilon -(1+\delta), 11+\epsilon +\delta-(1-\delta)]$.

However, since the coolant is not
active in $\mathit{Eng_{n-1}}$ and is active in $\mathit{Eng_{n}}$ it means that the coolant has been switched on in $\mathit{Eng_{n-1}}$ because the sensed temperature  was above $10$ (this may happen
 only if $v_{n-1} > 10-\epsilon $).
By inductive hypothesis, since  the coolant is not active  in $\mathit{Eng_{n-1}}$, we have that
$v_{n-1} \in [0, 11+\epsilon +\delta ]$.
Therefore, from $v_{n-1} > 10-\epsilon $  and 
$v_{n-1} \in [0, 11+\epsilon +\delta ]$ it follows that  $v_{n-1} \in (10-\epsilon , 11+\epsilon +\delta ]$. 
Furthermore, 
since the coolant is active in $\mathit{Eng_{n}}$, the temperature will
decrease of a value in $[1-\delta,1+\delta]$ and therefore
$v_n \in (10-\epsilon -(1+\delta), 11+\epsilon +\delta-(1-\delta)]$  which concludes this case of the proof.

\item 
Suppose the coolant is   active  in $\mathit{Eng_{n{-}1}}$ (and active in $\mathit{Eng_{n}}$ as well).

By inductive hypothesis there is $h \in 1..5$ such that  $v_{n-1} \in (
10-\epsilon -h *(1+\delta) , 11+\epsilon +\delta -h*(1-\delta)] $ and  the coolant is
not active in $\mathit{Eng_{n{-}1{-}h}}$ and is active in $\mathit{Eng_{n{-}h}},
\ldots, \mathit{Eng_{n{-}1}}$.

The case   $h=5$ is not admissible. In fact, since $\delta=0.4$ and $\epsilon=0.1$,
 if $h=5$ then  
 $v_{n-1 }\in (10-0.1 -5*1.4, 11+ 0.1 +\delta -5*0.6]=(2.8, 8.6] $. Furthermore, since
the coolant is already active since $5$ $\tick$ actions, the controller of 
 $\mathit{Eng_{n{-}1}}$ is supposed to check the
temperature. As  $v_{n-1 }\in (2.8, 8.6] $ the coolant 
should be turned off. 
In contradiction  with the the fact that  the coolant is   active  in $\mathit{Eng_{n }}$.

Hence it must be 
$h \in 1 .. 4$. Let us prove that for $k=h+1$ we obtain  our result. 
Namely we have to prove  that, for $k=h+1$,  (i)   $v_{n} \in ( 10-\epsilon -k *(1+\delta), 11+\epsilon +\delta -k*(1-\delta)] $,   and (ii)
 the coolant is not active in $\mathit{Eng_{n{-}k}}$ and  active
in $\mathit{Eng_{n-k+1}}, \dots, \mathit{Eng_{n}}$.

Let us prove the  statement (i). By inductive hypotheses, it holds that  
$v_{n-1} \in ( 10-\epsilon -h *(1+\delta) , 11+\epsilon +\delta -h*(1-\delta)] $.
Since the coolant is active in  $\mathit{Eng_{n}}$
 then
the temperature
will decrease 
Hence,  
$v_{n }   \in ( 10-\epsilon -(h+1) *(1+\delta) , 11+\epsilon +\delta -(h+1)*(1-\delta)]   $. 
Therefore, since  $k=h+1$, we have that 
$v_{n} \in ( 10-\epsilon -k *(1+\delta) , 11+\epsilon +\delta -k*(1-\delta)] $.

 Let us prove the statement (ii).
By
 inductive hypothesis  the coolant is
inactive in $\mathit{Eng_{n-1-h}}$ and  it is active in $\mathit{Eng_{n-h}},
\ldots, \mathit{Eng_{n-1}}$. Now, since the coolant is active in  $\mathit{Eng_{n}}$, for  $k=h+1$, we have that  the coolant is 
not active in $\mathit{Eng_{n-k}}$ and is active in $\mathit{Eng_{n-k+1}},
\ldots, \mathit{Eng_{n}}$ which concludes this case of the proof.
\end{itemize}

\end{enumerate}
\end{proof}

\noindent 
\textbf{Proof of \autoref{prop:sys}}
\begin{proof}
By   \autoref{lem:sys} and since $\delta=0.4$ and $\epsilon=0.1$,
 the value of the state variable $\mathit{temp}$ is always in the real interval 
$[0, 11.5]$. As a consequence, the invariant of the system 
is never violated and the system never deadlocks.
Moreover, after $5$ $\tick$ -actions of cooling the state variable $\mathit{temp}$ is always  in the real interval  $( 10-0.1-5 *1.4 , 11+0.1+0.4-5*0.6]=(2.9, 8.5]$. 
Hence the process $\mathit{Ctrl}$ will never transmit on the channel $\mathit{warning}$.
\end{proof}

\noindent 
\textbf{Proof of \autoref{prop:X}}
\begin{proof}
Let us prove the two statements separately. 
\begin{itemize}
\item   If process $\mathit{Ctrl}$ senses a
temperature above $10$  (and hence $\mathit{Eng}$ turns on the cooling) 
then the value of the state variable
$\mathit{temp}$ is greater than $10-\epsilon $. By  \autoref{lem:sys}
 the value of the state variable $\mathit{temp}$ is always less or equal than
$11+\epsilon +\delta $. Therefore, if $\mathit{Ctrl}$ senses a temperature above $10$,
then the value of the state variable $\mathit{temp}$ is in $(10-\epsilon ,11+\epsilon +\delta ]$.

\item
 By   \autoref{lem:sys}  (third item) the coolant can
 be active for no more than $5$ time slots.
Hence,  by  \autoref{lem:sys}, when 
$\mathit{Eng}$ turns off the cooling system 
the state variable $\mathit{temp}$ ranges over  $( 10-\epsilon -5 *(1+\delta) , 11+\epsilon +\delta-5*(1-\delta)]$.
\end{itemize}
\end{proof}

In order to prove \autoref{prop:performances} 
we use the following lemma that is a variant of \autoref{lem:sys}.
Differently from \autoref{lem:sys}, when active then cooling system will remain so for 
$k\in1..5$   time slots (counting also the current time slot) being the variable $\mathit{temp}$  in the real interval 
 $( 10-\epsilon -k{*}(0.8{+}\delta) , 11.5-k{*}(0.8{-}\delta)]$.

\begin{lemma} 
\label{lem:sys2}
Let $\overline{\mathit{Eng}}$ be the system defined in \autoref{sec:case-study}.
Let
\begin{small}
\begin{displaymath}
\overline{\mathit{Eng}} = \overline{\mathit{Eng_1}} \trans{t_1}\trans\tick 
\overline{\mathit{Eng_2}}\trans{t_2}\trans\tick  \dots 
\trans{t_{n-1}}\trans\tick  \overline{\mathit{Eng_n}}
\end{displaymath}
\end{small}
such that the traces $t_j$ contain no $\tick$-actions, for any $j \in  1 .. n{-}1 $,  and for any  $i \in  1 .. n $ $\overline{\mathit{Eng_i}}= \confCPS {E_i}{P_i} $ with 
$E_i = \envCPS 
{\statefun^i{}} 
{\actuatorfun^i{}} 
{ \delta }  
{\evolmap{}}
{ \epsilon }  
{\measmap{}}   
{\invariantfun{}}$.
Then, for any $i \in 1 .. n{-}1 $ we have the following:
\begin{enumerate}

\item   if   $ \actuatorfun^i{}(\mathit{cool})= \off $ then
 $\statefun^i{}(\mathit{temp})  \in [0, 11+\epsilon +\delta ]$; 

\item  
  if   $ \actuatorfun^i{}(\mathit{cool})= \off $ and 
$\statefun^i{}(\mathit{temp})\in (10+\epsilon , 11+\epsilon +\delta ]$ then, in the next time slot,  $\actuatorfun^{i{+}1}{}(\mathit{cool})=\on$;

\item  
 if  $ \actuatorfun^i{}(\mathit{cool})=\on$ then   
 $\statefun^i{}(\mathit{temp}) \in ( 10-\epsilon -k {*}(0.8{+}\delta) , 11+\epsilon +\delta -k{*}(0.8{-}\delta)] $, 
for some  $k  \in 1 .. 5 $   such that $\actuatorfun^{i-k}{}(\mathit{cool})=\off $ and 
$\actuatorfun^{i-j}{}(\mathit{cool}) =\on $, for $j \in 0..k{-}1$. 
\end{enumerate}
\end{lemma}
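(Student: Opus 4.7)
The plan is to mimic the proof of \autoref{lem:sys} verbatim, since the only change between $\mathit{Eng}$ and $\overline{\mathit{Eng}}$ is the evolution map when the cooling is active: the heating rate is $-0.8$ instead of $-1$. Accordingly the bound $(1{-}\delta)$ becomes $(0.8{-}\delta)$ and $(1{+}\delta)$ becomes $(0.8{+}\delta)$ in item~3, while items~1 and~2 remain unchanged since they concern the inactive phase in which the dynamics is identical. The proof proceeds by mathematical induction on $n$, the number of $\tick$-actions in the trace, with base case $n=1$ following from the definition of $\overline{\mathit{Eng}}$ (cooling off, $\mathit{temp}=0$).

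In the inductive step I would consider the three items simultaneously, splitting on whether the cooling is active in $\overline{\mathit{Eng}_{n-1}}$ and in $\overline{\mathit{Eng}_{n}}$. For item~1 (cooling off at step $n$) the off-to-off and on-to-off sub-cases are handled exactly as in \autoref{lem:sys}: off-to-off is straightforward using induction and the $[1{-}\delta, 1{+}\delta]$ increment; for on-to-off the critical sub-case is $k=5$ where one must check that the temperature range forces the controller to shut the cooling down. For item~2 the sensor-read argument is identical. For item~3 (cooling on at step $n$), the $k=1$ sub-case reuses the fact that the controller only switches the cooling on when the sensed value exceeds $10$, forcing $v_{n-1}\in(10-\epsilon, 11+\epsilon+\delta]$; after one slot of cooling at rate $0.8$, $v_n \in (10-\epsilon-(0.8{+}\delta), 11+\epsilon+\delta-(0.8{-}\delta)]$. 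For $k>1$ one uses the inductive hypothesis for $h=k-1$ and the new decrement $[0.8{-}\delta, 0.8{+}\delta]$.

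The main obstacle is checking the $k=5$ case, i.e.\ verifying that after five consecutive cooling slots the sensed temperature is guaranteed to be at most $10$, so that the controller inevitably switches off the cooling (ruling out $h=5$ in the on-to-on sub-case of item~3 and justifying the case distinction in item~1). Instantiating the inductive bound with $\delta=0.4$, $\epsilon=0.1$, $k=5$ gives $v_{n-1}\in(10-0.1-5{\cdot}1.2,\, 11.5-5{\cdot}0.4] = (3.9, 9.5]$, so the sensor returns a value in $(3.8, 9.6]$, which is below~$10$. Thus the controller shuts off the cooling, exactly as in the proof of \autoref{lem:sys}. The remaining passage into item~1 (adding the inactive-slot increment $[1{-}\delta,1{+}\delta]$) yields $v_n\in(4.5, 10.9]\subseteq[0, 11{+}\epsilon{+}\delta]$, preserving the invariant. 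All other sub-cases are routine adaptations.
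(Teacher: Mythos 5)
Your proposal is correct and follows exactly the route the paper takes: the paper's own proof of this lemma is literally ``Similar to the proof of \autoref{lem:sys}'', i.e.\ the same induction with the decrement $1\pm\delta$ replaced by $0.8\pm\delta$ in the active phase. Your explicit verification of the critical $k=5$ case (yielding $v_{n-1}\in(3.9,9.5]$, sensed value at most $9.6<10$, hence forced shut-off, and then $v_n\in(4.5,10.9]\subseteq[0,11.5]$) matches the numerics the paper uses in the proof of \autoref{prop:performances} and is if anything more detailed than what the authors wrote.
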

\begin{proof}
Similar to the proof of  \autoref{lem:sys}.
\end{proof}

\noindent 
\textbf{Proof of \autoref{prop:performances}}
\begin{proof}
By \autoref{prop:sys} is sufficient to prove that $\overline{\mathit{Eng}}$ has no trace which deadlocks or emits an alarm.

By   \autoref{lem:sys2} and since $\delta=0.4$ and $\epsilon=0.1$,
 the value of the state variable $\mathit{temp}$ is always in the real interval 
$[0, 11.5]$. As a consequence, the invariant of the system 
is never violated and the system never deadlocks.
Moreover, after $5$ $\tick$ -actions of cooling the state variable $\mathit{temp}$ is always  in the real interval  $( 10-0.1 -5 *1.2 , 11+ 0.1 +0.4-5*0.4]=(3.9, 9.5]$. 
Hence the process $\mathit{Ctrl}$ will never transmit on the channel $\mathit{warning}$.
\end{proof}

\noindent 
\textbf{Proof of \autoref{prop:stop}}
\begin{proof}
 It is is enough to prove that there exists 
an execution trace of the engine $\widehat{\mathit{Eng}}$ containing an 
output along channel $\mathit{warning}$. Then the  result follows by an 
application of \autoref{prop:sys}.

We can easily build up a trace for $\widehat{\mathit{Eng}}$ 
in which, after $10$ $\tick$-actions, in the $11$-th time slot, 
 the value of the state variable  $\mathit{temp}$  is  $10.1 $. 
In fact, it is enough to increase the temperature of $1.01$ degrees
for the first $10$ rounds. Notice that this is an admissible value since,    $1.01  \in [  1-\delta,1+\delta ]= [  0.6,1.4]$.
Being $10.1 $ the value  of the state variable $\mathit{temp}$, there is an execution
 trace  
in which the  sensed temperature is $10$ (recall that $\epsilon=0.1$) and hence 
the cooling system is not activated. However, 
in the following time slot, i.e.\ the $12$-th time slot,
 the temperature may reach at most the value
$10.1  + 1+\delta=11.5$, imposing the activation of the cooling system. 
After $5$ time units of cooling, in the $17$-th time slot, 
the  variable $\mathit{temp}$ will be at most  
$11.5 -5 \ast (0.7-\delta)=11.5-1.5=10$.
Since $\epsilon = 0.1$, the sensed temperature would be  in 
the real interval $[9.9 ,10.1 ] $. Thus, there 
is an execution trace in which the sensed temperature is $ 10.1 $, 
which will be greater than $10$. 
As a consequence,  the warning will be emitted, in the $17$-th time slot.
\end{proof}

\noindent 
\textbf{Proof of \autoref{prop:air}}
\begin{proof}
By \autoref{prop:performances} we derive 
$\mathit{Eng} \approx \overline{\mathit{Eng}}$.
By simple 
$\alpha$-conversion it follows that $\mathit{Eng}_{\mathrm L} \approx \overline{\mathit{Eng}_{\mathrm L}}$ and 
$\mathit{Eng}_{\mathrm R} \approx \overline{\mathit{Eng}_{\mathrm R}}$, respectively. By \autoref{thm:congruence}(1) (and transitivity of $\approx$) it 
follows that $\mathit{Eng}_{\mathrm L} \uplus \mathit{Eng}_{\mathrm R}
\approx \overline{\mathit{Eng}_{\mathrm L}} \uplus \overline{\mathit{Eng}_{\mathrm R}}$. By \autoref{thm:congruence}(2) it follows that 
 $(\mathit{Eng}_{\mathrm L} \uplus \mathit{Eng}_{\mathrm R}) \parallel \mathit{Check}
\approx (\overline{\mathit{Eng}_{\mathrm L}} \uplus \overline{\mathit{Eng}_{\mathrm R}}) \parallel \mathit{Check}$. By \autoref{thm:congruence}(3) we
obtain $\mathit{Airplane} \approx \overline{\mathit{Airplane}}$.
\end{proof}

\end{document}